\author{Matthieu Lemerre \and Sébastien Bardin}
\institute{CEA, LIST}
\date{\today}
\title{Abstract Interpretation using a Language of Symbolic Approximation}
\begin{document}

\maketitle
\renewcommand{\topfraction}{0.98}	
\renewcommand{\bottomfraction}{0.98}	
    \setcounter{topnumber}{2}
    \setcounter{bottomnumber}{2}
    \setcounter{totalnumber}{4}     
    \setcounter{dbltopnumber}{2}    
    \renewcommand{\dbltopfraction}{0.9}	
    \renewcommand{\textfraction}{0.02}	
    \renewcommand{\floatpagefraction}{0.9}	
    \renewcommand{\dblfloatpagefraction}{0.9}	

\setlength{\textfloatsep}{3.3ex}
\setlength{\intextsep}{3.3ex}

\begin{abstract}  

The traditional abstract domain framework for imperative programs suffers from several shortcomings;
in particular it does not allow precise symbolic abstractions.  
To solve these problems, we propose a new abstract interpretation framework, based on symbolic expressions used both as an \emph{abstraction} of the program, and as the \emph{input} analyzed by abstract domains.
We demonstrate new applications of the framework: 
 an abstract domain that efficiently propagates constraints \emph{across the whole program}; 
 a new formalization of functor domains as \emph{approximate translation}, which allows the production of \emph{approximate programs}, on which we can perform classical symbolic techniques.
We used these to build a complete analyzer for embedded C programs, that demonstrates the practical applicability of the framework.
\end{abstract}
\newcommand{\todo}[1]{\textcolor{red}{\\TODO #1}}
\newcommand{\maybe}[1]{\textcolor{orange}{\\MAYBE #1}}
\newcommand{\done}[1]{\textcolor{green}{\\DONE #1}}
\renewcommand{\todo}[1]{}
\renewcommand{\maybe}[1]{}
\renewcommand{\done}[1]{}
\newcommand{\notnow}[1]{}
\newcommand{\muterm}[4]{(\mu #1.\ #2[#3])(#4)}

\section{Introduction}
\label{sec:org48d0c05}

\noindent {\bf Context} The usual \emph{lattice-based} structuring of abstract interpreters for
imperative programs by \citet{cousot1977abstract} consists in
associating, to each program point, the element of a lattice
representing the over-approximation of the set of possible states at
that point. The interface of abstract domains then consists in
transfer functions that interpret the syntax of the program to compute
the lattice element corresponding to the next program point; as well
as lattice operations such as inclusion (\(\sqsubseteq\)), join
(\(\sqcup\)), and widening (\(\nabla\)). This approach has been highly
successful in both laying the theoretical foundation of software
analysis techniques, and applying them in industrial applications
\cite{blanchet2003static,fahndrich2011static,kirchner2015framac}. 
\looseness=-1

\medskip
\noindent {\bf Problem}
Despite these successes the design of industrial-strength analyzers is
still a challenging art. We highlight two problems, detailed in
Section \ref{sec:main-ideas}.

First, the abstract state at each program point contains a mapping of
the whole memory; yet the abstract state between nearby program
points is nearly identical. Consequences include a high memory
consumption and expensive operations. For instance, memory join is
costly, because every memory location is joined, instead
of only the few that differ. Mitigating this problem requires
implementation tricks such as using functional data structures
\cite{blanchet2002design}, possibly with hash-consing
\cite{kirchner2015framac}; essentially the implementation tries to
share what was duplicated by the theoretical framework.

Another important limitation is the handling of symbolic
relations. Symbolic abstract domains are necessary in practical
lattice-based static analyzers
\cite{gange2016uninterpreted,mine2007symbolic,chang2005abstract,djoudi2016recovering,logozzo2008relative,jourdan2015formally},
notably to handle the case where an intermediate computation is put
in a temporary variable (like \texttt{cond} in Figure
\ref{fig:constraint-domain-example}). These abstract domains are
limited by the fact that they must fit in the lattice-based
framework; in particular, their join and widening operations are
always very imprecise.

\medskip
\noindent {\bf Goal, challenge and proposal\ }
Because these limitations are hard to fix in the standard
\emph{lattice-based} framework, our goal is to design an alternative
framework that fixes these issues; but also encompasses the
lattice-based framework, so as to reuse previous work.

We propose a new \emph{term-based} framework, based on a language of
symbolic expressions (Section \ref{sec:laf-syntax-semantics}); which
can be used both as the abstract state inside the abstract domain
(i.e. replacing the lattice elements) or as the input of the abstract
domain (e.g. replacing the program itself). Abstract domains are
functions, that evaluate a symbolic expression to an \emph{abstract value}, that
can be \emph{concretized} into a set of possible values for the variables
of the expression (Section \ref{sec:abstr-evalua}). 

\medskip
\noindent {\bf Contributions}
Our main contribution is the design of a new term-based abstract
interpretation framework. Key ingredients include the Language of
Approximation and Fixpoint (LAF) logic (Section
\ref{sec:laf-collecting-semantics}) with \emph{nondet} and \(\mu\) operators
-- replacing join and widening; its
collecting semantics; and the definition of abstract domains as
abstract interpreter of LAF terms (Section
\ref{sec:abstr-evalua-definition}).

Our second contribution is specific instances of term-based abstract
domains: we show how to lift lattice-based abstract domains to
term-based abstract domains (Section \ref{sec:non-rel-abs-int}) (and
demonstrate improved complexity in the case of non-relational abstract
domains); we provide new abstract domains based on term rewriting
(Section \ref{sec:simple-rewriting-abs-int}). In Section
\ref{sec:constraint-propag-abs-domain} we combine term rewriting and
lattice-based abstract interpretation in a new domain, that performs
backward and forward constraint propagation across the whole program.

Our last contribution is the evaluation of an early implementation of
the approach (Section \ref{sec:complete-example-abs-inter}), but that
already works on industrial case studies and large SVComp
benchmarks. Term-based abstract interpretation allows to
decompose the analyzer as a succession of transformation over LAF
terms, each individually simple, but that mutually refine one another
to provide a precise result. A first experiment demonstrates the
interest of whole-program constraint propagation. A second shows how
we leverage the production of approximate LAF terms by the term-based
abstract domains, to export simplified formula to a Horn-based model
checker.

\section{Motivation and key ideas}
\label{sec:org02a4b4e}
\label{sec:main-ideas}

\maybe{Peut-etre avoir une approche top-down: commencer par l'exemple motivant (hierarchy of translating domains), avant de dire les besoins que ca cree (produire et analyser des termes)}

Our method can be summarized as using a special language of symbolic
expressions; used both as the abstract state inside an abstract
domain, or as the program that the abstract domain analyze. We first
cover the challenges of maintaining a precise symbolic abstraction,
and why we require a new framework for doing so; the benefits of
performing abstract interpretation over these symbolic expressions;
and the organization of an analyzer structured using "translator
abstract domains", both inputting and outputting symbolic expressions.

Figure \ref{fig:constraint-domain-example} illustrates the use of our
language as an abstraction of the program (in the middle), and as the
language on which abstract interpretation is performed (on the
right). This example is explained in detail in Section
\ref{sec:constraint-propag-abs-domain}.

\begin{figure}[t]
\vspace{-3mm}
\begin{minipage}{0.35\textwidth}
\begin{verbatim}
void main(int x){
  int abs,nabs;
  bool cond = x < 0;
  if( cond) 
    { abs = -x;
      nabs = x; }
  else 
    { abs = x;
      nabs = -x; }

  assert(abs == -nabs);
  if(!(abs <= 8)) 
    while(1);
  assert(x/9 == 0);
}
\end{verbatim}
\end{minipage}%
\begin{minipage}{0.31\textwidth}
\begin{align*} 
& \textrm{let\ } c_1 \triangleq x < 0 \\
& \textrm{let\ } nx \triangleq - x \\
& \textrm{let\ } t_1 \triangleq \langle nx, x \rangle \\
& \textrm{let\ } t_1' \triangleq \textrm{assume}(c_1,t_1) \\
& \textrm{let\ } t_2 \triangleq \langle x, nx \rangle \\
& \textrm{let\ } t_2' \triangleq \textrm{assume}(\lnot c_1,t_2) \\
& \textrm{let\ } t_3 \triangleq \textrm{nondet}(t_1',t_2') \\
& \textrm{let\ } abs \triangleq t_3[0] \\
& \textrm{let\ } nabs \triangleq t_3[1] \\
\\
& \textrm{let\ } c_3 \triangleq (abs = -nabs) \\
& \textrm{let\ } xdiv \triangleq x / 9 \\
& \textrm{let\ } c_2 \triangleq abs \le 8 \\
& \textrm{let\ } c_4 \triangleq (xdiv = 0) \\
& \textrm{in\ } c_4 \\
\end{align*}%
\end{minipage}%
\begin{minipage}{0.38\textwidth} \scriptsize
\begin{align*} 
& x   \hspace{-4mm}& & \mapsto  \left[\begin{array}{ll} true & \Vdash [{-}\!\infty;{+}\!\infty] \\ c_1 & \Vdash [{-}\!\infty;{-}1] \\ \lnot c_1 & \Vdash [0;{+}\!\infty] \\ c_1 \land c_2 & \Vdash [-8;-1] \\ \lnot c_1 \land c_2 & \Vdash [0;8] \end{array} \right. \\
& nx   \hspace{-4mm}& & \mapsto  \left[\begin{array}{ll} true & \Vdash [{-}\!\infty;{+}\!\infty] \\ c_1 & \Vdash [1;{+}\!\infty] \\ \lnot c_1 & \Vdash \ [{-}\!\infty;0] \\ c_1 \land c_2 & \Vdash [1;8] \\ \lnot c_1 \land c_2 & \Vdash [-8;0] \end{array} \right. \\
& abs   \hspace{-4mm}& & \mapsto  \left[\begin{array}{ll} true\ & \Vdash [0;{+}\!\infty] \\ c_2 & \Vdash [0;8]  \end{array} \right. \\
& nabs   \hspace{-4mm}& & \mapsto  \left[\begin{array}{ll} true & \Vdash [1;{+}\!\infty]  \end{array} \right. \\
& c_1   \hspace{-4mm}& &  \mapsto  \left[\begin{array}{ll} true & \Vdash \{true;false\} \\ c_1 & \Vdash \{true\} \\  \lnot c_1 & \Vdash \{false\}  \end{array} \right. \\
& c_3   \hspace{-4mm}& &  \mapsto  \left[\begin{array}{ll} true & \Vdash \{true;false\} \end{array} \right. \\
& c_2   \hspace{-4mm}& &  \mapsto  \left[\begin{array}{ll} true & \Vdash \{true;false\} \\ c_2 & \Vdash \{true\} \end{array} \right. \\
& xdiv   \hspace{-4mm}& & \mapsto  \left[\begin{array}{ll}  c_2 & \Vdash [0;0] \end{array} \right. \\
& c_4   \hspace{-4mm}& &  \mapsto  \left[\begin{array}{ll} c_2 & \Vdash \{true\} \end{array} \right. \\
\end{align*}%
\end{minipage}
\vspace{-4mm}
\caption{Applying the constraint propagation abstract domain on a C program (left). Middle: term abstraction built by the domain; $x$ represents the value of {\tt x}. Right: mapping from variables to abstract values, according to some conditions. } 
\label{fig:constraint-domain-example}

\end{figure}

\subsection{Computing symbolic abstractions}
\label{sec:org7cc5184}

Symbolic abstract domains compute abstractions represented by a \emph{term}
(expression) in some language. Existing symbolic abstract domains 
lose precision when assignment, join or widening is performed
\cite{mine2007symbolic,gange2016uninterpreted,djoudi2016recovering}. Indeed, having a fully
precise symbolic abstract domain cause major difficulties, that cannot
be solved using the standard interface to abstract domains.

\smallskip
\noindent {\bf Variables for values, not for
locations\ } In usual A.I. of imperative programs, a precision loss
may occur when a variable \texttt{x} is overwritten: relations such as \(y =
2 * x\) have to be "killed" (if x cannot be substituted). This affects
relational abstract domains \cite{karr1976affine}, including symbolic
domains
\cite{mine2007symbolic,gange2016uninterpreted,djoudi2016recovering}.

But this issue only occurs because variables represent (mutable)
memory \emph{locations}. When variables represent (immutable) \emph{values},
formula remain valid regardless of which memory cell gets
overwritten. In LAF (but also in \cite{chang2005abstract,chang2013modular}), \emph{variables denote values}, thus there is no need
to kill any relation.

\smallskip 
\noindent {\bf Symbolic join is not a least upper bound\ } 
Usual A.I. requires abstract domains to have a lattice structure. We
believe that this requirement does not fit well symbolic abstraction,
and explain why below.

Intuitively, two expressions \(e_1\) and \(e_2\) can be "joined" using a
new expression "\(\textrm{nondet}(e_1,e_2)\)", representing a
non-deterministic choice between \(e_1\) and \(e_2\). But to avoid loosing
precision, the resulting expression should be given a name. For
instance, "\(\textrm{nondet}(2,7) - \textrm{nondet}(2,7)\)" represents
any value in \(\{-5,0,5\}\), as both "\(\textrm{nondet}(2,7)\)"
expressions may evaluate to different values; while "\(\textrm{let\ } v
= \textrm{nondet}(2,7) \textrm{\ in\ } v - v\)" always evaluate
to 0. In LAF (Section \ref{sec:laf-syntax-semantics}), we give a fresh
name to every sub-expression, even if only the result of
nondeterministic operations do require a name.

However, the necessity of giving new unique names to the result of
joining expressions mean that this "symbolic expression abstraction"
cannot have a lattice structure: each "join" of two terms generates a
\emph{different} least upper-bound of these two expressions.

We solve this issue simply by not requiring abstract domains to have a
lattice structure. This choice allows us to handle join (and loops)
precisely while existing lattice-based domains
\cite{gange2016uninterpreted,mine2007symbolic,chang2005abstract,djoudi2016recovering,logozzo2008relative,jourdan2015formally}
cannot. While this is very unusual, abstract interpretation is not
necessarily lattice-based \cite{cousot1992abstract}: the core of
abstract interpretation is to compute an abstract state which is sound
with regards to the collecting semantics of the program, which is what
our framework does (Section \ref{sec:abstr-evalua}). \looseness=-1

\smallskip 
\noindent {\bf Widening does not fit symbolic abstract domains\ } 
LAF introduces a \(\mu\) operator that allows to \emph{fold} expressions,
which is necessary to have a finite abstraction in the presence of
loops. For instance, the contents of \texttt{x} in the program \texttt{while(*) x =
x + x;} is represented in LAF by "\((\mu x. x + x)(x_0)\)", where
\(x_0\) represents the initial value of \texttt{x}. This expression represents a
non-deterministic choice between "\(x_0\)", "\(x_0 + x_0\)",
"\(\mathrm{let\ } x_1 = (x_0 + x_0) \mathrm{\ in\ } x_1 + x_1\)", etc.

However, the standard operator used to find a finite abstraction in
the presence of loops, widening, is not the right tool to find this
folding. In essence, widening amounts to guessing a loop invariant
given (a computation using) the first iterations of a loop. In the
example above, the initial value of \texttt{x} is \(x_0\), and the one at the
beginning of the second iteration is \(x_0 + x_0\). But there is an
infinity of possible foldings that match these two iterations,
including \((\mu x. x + x_0)(x_0)\); \((\mu x. x_0 + x_0 + 37 * (x -
x_0))\); etc.; and this is even more difficult if expressions can be
rewritten before widening!

The interface that we propose for fixpoint computation, based on the
evaluation of LAF terms, takes as input the value at the loop entry
(the argument of the \(\mu\) expression), together with the \emph{effect} of
the loop (the body of the \(\mu\) expression). This way, we always
succeed in computing the folding of an expression, in a single
step. Moreover, it is sufficiently general to also allow widening-like
fixpoint computations; but also acceleration
\citep{gonnord2013abstract}, policy iteration
\citep{costan2005policy}, inductive invariant generation
\cite{oliveira2016polynomial}\ldots{}

\subsection{Benefits of abstract interpretation over symbolic expressions}
\label{sec:orgf6faea2}

LAF terms are purely functional expressions: the values to which they
can evaluate is independent of a program point. Consequently, there is
no need to split semantic information by program points as in
classical analysis. Instead, semantic information can be put in a
\emph{single store}, which enables more sharing. 

For instance, in Figure \ref{fig:constraint-domain-example}, the
semantic store (a map from LAF variables to their values, according to
conditions of the program) is represented at the right of the
figure. The possible values for \texttt{nabs} after the join are centralized
in the store, instead of being duplicated in the abstract element
corresponding to the following program points.

One consequence of this single-store architecture is that transfer
functions are more efficient. For instance the semantic store of
Figure \ref{fig:constraint-domain-example} can be implemented by an
extensible array, and computing the interval of a variable is an
amortized \(\mathcal{O}(1)\) operation. Moreover, \emph{nondet} can be seen
as a join \emph{targeting} the variables that have changed (\texttt{abs} and
\texttt{nabs}). This optimization is essential, as most conditionals impact a
small number of memory locations \cite{blanchet2003static}; targeted
join (and targeted fixpoint) alleviates the need for each abstract
domain to detect the parts of the memory that have changed, improving
the complexity of these operations (Section
\ref{sec:non-rel-abs-int}).

In the usual A.I., refining semantic information between statements is
limited because the semantic information is split into program
points. The single-store architecture removes this
restriction. Section \ref{sec:constraint-propag-abs-domain} details
how, on Figure \ref{fig:constraint-domain-example}, the constraint
propagation traverses the whole program to prove that \texttt{x} \(\in [-8;8]\)
at the end of the program (and the last assertion). The presence of
\emph{nondet} operation and its traversal are essential to relate \texttt{x} in
the program to its absolute value \texttt{abs}. Moreover this constraint
propagation is done using the LAF term, i.e. on the data dependencies
of the program, automatically skipping statements unaffected by the
refinement (like in sparse analyses \cite{oh2012design}).

\maybe{MAYBE Path-sensitive analysis? Est-ce que c'est un exemple? Peut-etre que oui.}
\maybe{MAYBE Indépendence de l'emploi de variables temporaires. Peut-etre en related works}

\subsection{Hierarchy of translator domains}
\label{sec:org2c074f9}
\label{sec:hierarchy-translator-domains}

Our abstract domains are able to both abstractly interpret LAF terms,
and produce a program abstraction as a LAF term. This allows to
implement \emph{translator domains}. Translator domains perform a \emph{dynamic}
translation of an input term into a simplified, possibly approximated,
output term, using semantic information computed during the
analysis (on either the input or output term).

One can then structure an abstract interpreter as a combination of
(simple) translator domains. Superficially, this resembles the
structure of compilers; however translator domains are not just
sequential passes, but abstract domains executing simultaneously, and
mutually refining each other.

\begin{figure}[htbp]
{\begin{minipage}{0\textwidth}

\begin{verbatim}
union { struct { uint8 l; uint8 h; } b;
        uint16 w; } r[3];

r[0].w = X;
if(*) {
  r[1].b.h = r[0].b.h;
  r[1].b.l = r[0].b.l;
  r[1/X].w = r[1].w;
}
assert(r[0].w == X);
\end{verbatim}
\vspace{-3mm}
$\begin{array}{l}
\cr \hspace{0mm}\textrm{let\ } X^l \triangleq X[7..0] \\[-9pt]
\cr \hspace{0mm}\textrm{let\ } X^h \triangleq X[15..8] \\[-9pt]
\cr \hspace{0mm}\textrm{let\ } X' \triangleq \textrm{concat}(X^h,X^l) \\[-9pt]
\cr \hspace{0mm}\textrm{let\ } X'' \triangleq \textrm{nondet}(X,X') \\[-9pt]
\cr \hspace{0mm}\textrm{let\ } X''' \triangleq \textrm{nondet}(X',X') \\[-9pt]
\cr \hspace{0mm}\textrm{let\ } t_{then} \triangleq \langle X'', X'''[7..0],X'''[15..8] \rangle \\[-9pt]
\cr \hspace{0mm}\textrm{let\ } t_{else} \triangleq \langle X, r_1^l,r_1^h \rangle \\[-9pt]
\cr \hspace{0mm}\textrm{let\ } t \triangleq \textrm{nondet}(t_{then},t_{else}) \\[-9pt]
\cr \hspace{0mm}\textrm{let\ } assertion \triangleq (t[0] = X)
\end{array}$

\end{minipage}%
\begin{minipage}{0.5\textwidth}
\vspace{-8.5mm}%
\begin{tikzpicture}

\begin{scope}[shift={(0,0)},xscale=0.9]

  \begin{scope}[yscale=0.5,shift={(-0.5,-0.5)}]
  \draw (0,0) grid (2,1);
  \draw (2,0) grid (4,1);
  \draw (4,0) grid (6,1);
  \end{scope}
  
  \node at (0,0) {$r_0^l$};
  \node at (1,0) {$r_0^h$};
  \node at (2,0) {$r_1^l$};
  \node at (3,0) {$r_1^h$};
  \node at (4,0) {$r_2^l$};
  \node at (5,0) {$r_2^h$};
  \node at (6,0) {$R_0$};
\end{scope}

\begin{scope}[shift={(0,-1)},xscale=0.9]

  \begin{scope}[yscale=0.5,shift={(-0.5,-0.5)}]
  \draw (0,0) rectangle (2,1);
  \draw (2,0) grid (4,1);
  \draw (4,0) grid (6,1);
  \end{scope}
  
  \node at (0.5,0) {$X$};
  \node at (2,0) {$r_1^l$};
  \node at (3,0) {$r_1^h$};
  \node at (4,0) {$r_2^l$};
  \node at (5,0) {$r_2^h$};
  \node at (6,0) {$R_1$};
\end{scope}

\begin{scope}[shift={(0,-2)},xscale=0.9]

  \begin{scope}[yscale=0.5,shift={(-0.5,-0.5)}]
  \draw (0,0) rectangle (2,1);
  \draw (2,0) grid (6,1);
  \end{scope}
  
  \node at (0.5,0) {$X$};
  \node at (2,0) {$r_1^l$};
  \node at (3,0) {$X^h$};
  \node at (4,0) {$r_2^l$};
  \node at (5,0) {$r_2^h$};
  \node at (6,0) {$R_2$};
\end{scope}

\begin{scope}[shift={(0,-3)},xscale=0.9]

  \begin{scope}[yscale=0.5,shift={(-0.5,-0.5)}]
  \draw (0,0) rectangle (2,1);
  \draw (2,0) grid (6,1);
  \end{scope}
  
  \node at (0.5,0) {$X$};
  \node at (2,0) {$X^l$};
  \node at (3,0) {$X^h$};
  \node at (4,0) {$r_2^l$};
  \node at (5,0) {$r_2^h$};
  \node at (6,0) {$R_3$};
\end{scope}

\begin{scope}[shift={(0,-4)},xscale=0.9]

  \begin{scope}[yscale=0.5,shift={(-0.5,-0.5)}]
  \draw (0,0) rectangle (2,1);
  \draw (2,0) rectangle (4,1);
  \draw (4,0) grid (6,1);
  \end{scope}
  
  \node at (0.5,0) {$X''$};
  \node at (2.5,0) {$X'''$};
  \node at (4,0) {$r_2^l$};
  \node at (5,0) {$r_2^h$};
  \node at (6,0) {$R_4$};
\end{scope}

\begin{scope}[shift={(0,-5)},xscale=0.9]

  \begin{scope}[yscale=0.5,shift={(-0.5,-0.5)}]
  \draw (0,0) rectangle (2,1);
  \draw (2,0) grid (4,1);
  \draw (4,0) grid (6,1);
  \end{scope}
  
  \node at (0.5,0) {$t[0]$};
  \node at (2,0) {$t[1]$};
  \node at (3,0) {$t[2]$};
  \node at (4,0) {$r_2^l$};
  \node at (5,0) {$r_2^h$};
  \node at (6,0) {$R_5$};
\end{scope}

\begin{scope}[shift={(0,0.7)}]
\coordinate (p1) at (-0.45,-0.5);
\coordinate (p2) at (-0.7,-0.85);
\coordinate (p3) at (-0.8,-1.7);
\coordinate (p4) at (-0.9,-2.1);
\coordinate (p5) at (-1,-2.4);
\coordinate (p6) at (-1.1,-2.8);
\end{scope}

\draw[black!70] (-0.45,0) -| (p1);
\draw[black!70] (-0.45,-1) -| (p2);
\draw[black!70] (-0.45,-2) -| (p3);
\draw[black!70] (-0.45,-3) -| (p4);
\draw[black!70] (-0.45,-4) -| (p5);
\draw[black!70] (-0.45,-5) -| (p6);

\draw[black!70] (p1) -- +(-4.1,0);
\draw[black!70] (p2) -- +(-3.85,0);
\draw[black!70] (p3) -- +(-2.1,0);
\draw[black!70] (p4) -- +(-2.0,0);
\draw[white] (p5) -- +(-5.4,0);
\draw[black!70] (p5) -- +(-5.2,0);
\draw[black!70] (p6) -- +(-5.1,0);

\end{tikzpicture}
\end{minipage}}%

\caption{A hierarchy of abstract domains handling low-level memory operations. 
The right part contains the abstract representation of the memory region for \texttt{r} at different
program points. This representation contains variables of the LAF term (with bitvector theory) on the left. $\cdot[b..a]$ denotes substring extraction of a bitvector.}
\label{fig:example-mine}

\end{figure}
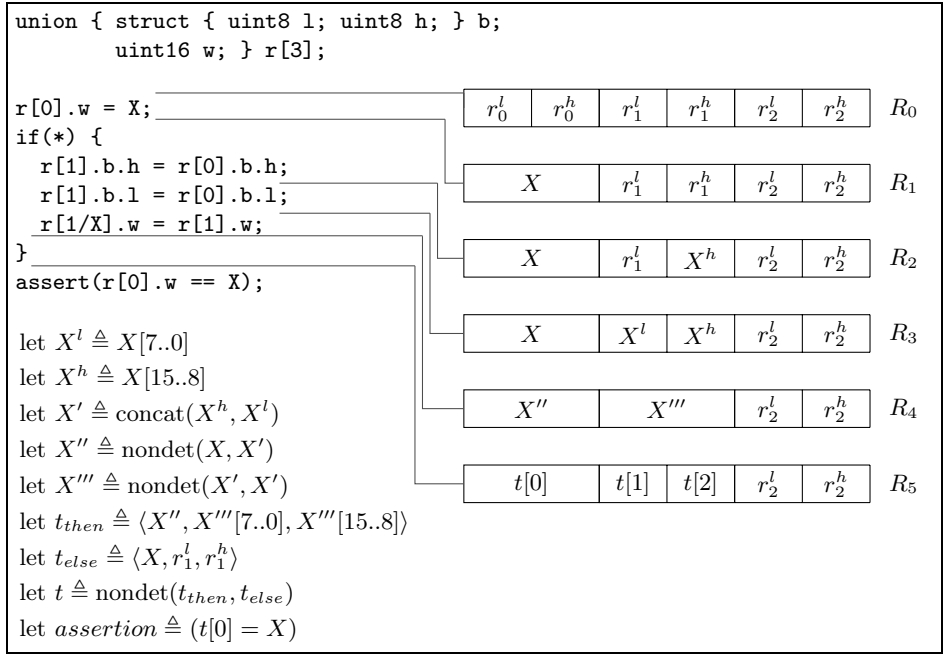

\smallskip
\noindent {\bf Example: translator abstract domain for low-level memory operations\ }
Figure \ref{fig:example-mine} shows how this can be applied to the
static analysis of C programs with low-level memory manipulation such
as pointer arithmetic, casts and bitfield (similarly to
\citet{mine2006field}). The memory abstract domain represents memory
regions as a contiguous sequence of slices. To each slice correspond
the value that was last written. These values are represented as
variables of a LAF term (in the theory of bitvectors). The right of
the figure represents the memory region for \texttt{r} at different program
points. Each write modifies the region by replacing the slice with the
corresponding offset, size, and value.

\maybe{MAYBE: Faire un X''' qui vaudrait nondet(X',X')?}
\maybe{MAYBE: Montrer dans l'exemple Le cas où on scinde les slices adjascents}
\maybe{MAYBE: Fusionner cet exemple avec le premier?}
\todo{TODO: Donner la traduction en terme de regions? Et le schema de l'analyseur  (avec l'export et la simplification des termes)}

The memory domain has to know the possible values for array
indices. This is done by \emph{querying} the underlying abstract domain for
the possible values of these indices. If there is more than one
possible value (e.g. \(1/X \in \{-1;0;1\}\)), the abstract domain
performs a weak update, i.e. the corresponding slice is mapped to a
non-deterministic choice between the new and old value (\(R_4\)). On
program joins, the slices are compared: only the cells that do not
contain syntactically equal variables are "joined", which targets the
join on memory cells that have been modified by the conditional.

\maybe{MAYBE The resulting domain is simple. The main feature that makes its
definition possible is the possibility of "joining" variables.}

\smallskip
\noindent {\bf Proving the assertion with symbolic reasoning\ }
\label{sec:proving-assertion-symbolic-reasoning}
Now, there are several ways to solve the final assertion. The first is
to insert between these domains a syntactic rewrites abstract
domains. Applying the following four simple rewriting rules suffices
to prove the assertion (all primed versions of \(X\) are equal):
\begin{align*}
& concat(x[a..b],x[b-1..c]) \ \to\  x[a..c] & & x[0..c] \textrm{\ when\ sizeof}(x) = c\ \to\  x \\
& \textrm{nondet}(x,x) \ \to\  x & & x == x \ \to\  true \\
\end{align*}
Another mean is to output the corresponding formula to a specialized
SMT solver, which is necessary for the most complex
assertions. Contrary to the original program, the simplified LAF
formula does not require memory operators: it only needs operator from
the bitvector theory, that many SMT solver understand. Thus, the
memory abstract domain can be viewed as a translator from the source
program to a simpler, approximate program, directly suitable for
export to specialized solvers.

\maybe{MAYBE: tagless final; dynamic dataflow}
\todo{Etendre avec la traduction originale (avec les loads). Et la hierarchie de domaines abstrait. Mettre cet exemple en premier}

\section{LAF: syntax and collecting semantics}
\label{sec:org974a88f}
\label{sec:laf-syntax-semantics}

The Logic of Approximation and Fixpoint (LAF) is the language we
designed to be used both as a symbolic abstraction in abstract
domains, and as an input for abstract domains. LAF can be viewed as a
non-deterministic, functional language representing the possible
results of a computation.

\maybe{This section details the syntax and collecting semantics of LAF; the latter is used to define the soundness of abstract interpretation over LAF terms.}

\begin{figure}[htbp]
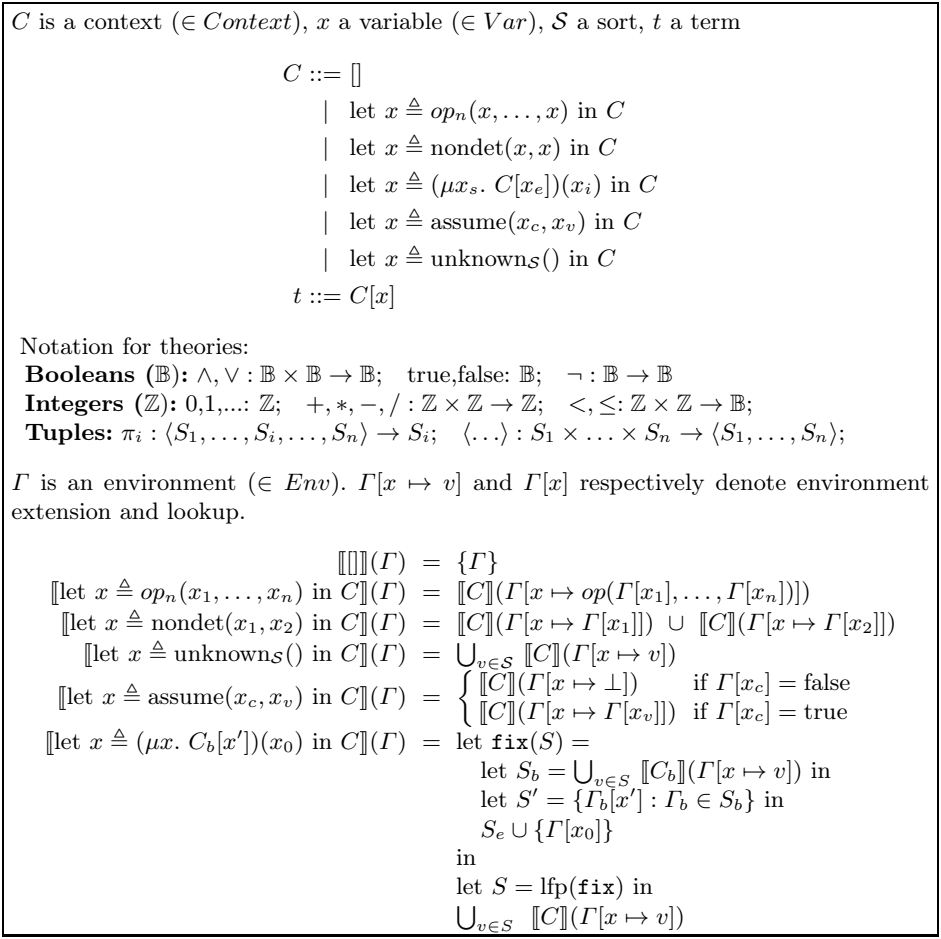

$C$ is a context ($\in Context$), $x$ a variable ($\in Var$), $\mathcal{S}$ a sort, $t$ a term
\begin{eqnarray*}
C & ::= & [] \\
  & | & \textrm{let\ } x \triangleq op_n(x,\ldots,x) \textrm{\ in\ } C \\
  & | & \textrm{let\ } x \triangleq \textrm{nondet}(x,x) \textrm{\ in\ } C \\
  & | & \textrm{let\ } x \triangleq \muterm{x_s}{C}{x_e}{x_i} \textrm{\ in\ } C \\
  & | & \textrm{let\ } x \triangleq \textrm{assume}(x_c,x_v) \textrm{\ in\ } C \\
  & | & \textrm{let\ } x \triangleq \textrm{unknown}_\mathcal{S}() \textrm{\ in\ } C \\
t & ::= & C[x] 
\end{eqnarray*}%
\vspace{-3mm}
Notation for theories:
\begin{description}
\item{\bf Booleans ($\mathbb{B})$:} $\land, \lor: \mathbb{B} \times \mathbb{B} \to \mathbb{B}$;\quad true,false: $\mathbb{B}$;\quad $\lnot: \mathbb{B} \to \mathbb{B}$
\item{\bf Integers ($\mathbb{Z})$:} 0,1,...: $\mathbb{Z}$;\quad $+,*,-,/: \mathbb{Z} \times \mathbb{Z} \to \mathbb{Z}$;\quad $<,\le: \mathbb{Z} \times \mathbb{Z} \to \mathbb{B}$;
\item{\bf Tuples:} $\pi_i: \langle S_1,\ldots,S_i,\ldots,S_n \rangle \to S_i$;\quad $\langle \ldots \rangle: S_1 \times \ldots \times S_n \to \langle S_1,\ldots,S_n \rangle$;
\end{description}
\vspace{-4mm}
\label{fig:laf-syntax}
\vspace{4mm}




$\Gamma$ is an environment ($\in Env$). $\Gamma[x \mapsto v]$ and $\Gamma[x]$ respectively denote environment extension and lookup. 
\[
\begin{array}{r@{\hspace{2mm}}c@{\hspace{2mm}}l}
\llbracket [] \rrbracket(\Gamma) & = & \{ \Gamma \} \\
\llbracket \textrm{let\ } x \triangleq op_n(x_1,\ldots,x_n) \textrm{\ in\ } C \rrbracket(\Gamma) & = & \llbracket C \rrbracket( \Gamma[x \mapsto {op}(\Gamma[x_1],\ldots,\Gamma[x_n])]) \\
\llbracket \textrm{let\ } x \triangleq \textrm{nondet}(x_1,x_2) \textrm{\ in\ } C \rrbracket(\Gamma) & = & \llbracket C \rrbracket( \Gamma[x \mapsto \Gamma[x_1]])\ \cup\ \llbracket C \rrbracket( \Gamma[x \mapsto \Gamma[x_2]]) \\
\llbracket \textrm{let\ } x \triangleq \textrm{unknown}_{\mathcal{S}}() \textrm{\ in\ } C \rrbracket(\Gamma) & = & \bigcup_{v \in \mathcal{S}}\ \llbracket C \rrbracket( \Gamma[x \mapsto v]) \\
\llbracket \textrm{let\ } x \triangleq \textrm{assume}(x_c,x_v) \textrm{\ in\ } C \rrbracket(\Gamma) & = & \left\{ \begin{array}{ll} \llbracket C \rrbracket(\Gamma[x \mapsto \bot]) & \textrm{\ if\ } \Gamma[x_c] = \textrm{false} \\
                                                                                               \llbracket C \rrbracket(\Gamma[x \mapsto \Gamma[x_v]]) & \textrm{\ if\ } \Gamma[x_c] = \textrm{true} \end{array} \right.\\
\llbracket \textrm{let\ } x \triangleq \muterm{x}{C_b}{x'}{x_0} \textrm{\ in\ } C \rrbracket(\Gamma) & = 
& \textrm{let\ \tt fix}(S) = \\
& & \quad \textrm{let\ } S_b = \bigcup_{v \in S} \ \llbracket C_b \rrbracket(\Gamma[x \mapsto v]) \textrm{\ in} \\
& & \quad \textrm{let\ } S' = \{ \Gamma_b[x'] : \Gamma_b \in S_b \} \textrm{\ in} \\
& & \quad S_e \cup \{ \Gamma[x_0] \}\\
& & \textrm{in\ } \\
& & \textrm{let\ } S = \textrm{lfp}(\tt fix) \textrm{\ in\ } \\
& & \bigcup_{v \in S}\ \ \llbracket C \rrbracket(\Gamma[x \mapsto v])
\end{array}
\]
\vspace{-4mm}
\caption{Syntax (top) and collecting semantics (bottom) of LAF}
\label{fig:direct-collecting-semantics}
\end{figure}

\subsubsection{Syntax}
\label{sec:orgec963c4}
\label{sec:laf-syntax}

A LAF term (Figure \ref{fig:laf-syntax}) is essentially a sequence of
\emph{variable} \emph{definitions}, followed by a single variable (the
\emph{result}). A variable is defined as the result of calling a primitive
operation over previously-defined variables: i.e. every intermediary
computation is \emph{named}, and the sequence of definitions is \emph{ordered}.

The terms need to be built incrementally; i.e. we want to extend a
term with new definitions. We represent this formally using evaluation
contexts \(C\) \cite{wright1994syntactic}, i.e. a "term with a
hole". The hole \([]\) can be substituted with a variable \(x\) (to form a
complete term) or with a context \(C_2\) (to form a new context). For
instance, if \(C = "\textrm{let\ } x \triangleq 12; \textrm{let\ } y \triangleq x + 1;
[]"\) then \(C[y]\) is a term evaluating to \(13\); \(C["\textrm{let\ } z \triangleq
x + y \textrm{\ in\ } []"]\) is a context appended with a new
definition (to improve readability, we usually write the latter
\("C[\textrm{let\ } z \triangleq x + y]"\)).

The primitive operations are those of a logical theory, (e.g. integer,
floating point or real arithmetic; array; bitvectors; uninterpreted
functions): there are many LAF languages, that depend on the theories
in use. Two theories are always present: the theory of Boolean
operations, and the theory of tuples.

\maybe{Sauf qu'on ne représente pas les termes. Donc autant C[x]  c'est pas une substitution, alors que $C_1[C_2]$ c'en est une (abus de   notation)}

\maybe{variables have all unique names.}

\maybe{Ma collecting semantics est semi-collecting; j'en ai une autre qui collecte sur celle-ci... Peut-etre que je devrais juste l'appeller "semantics"}

\subsubsection{Collecting Semantics}
\label{sec:orgd23adb5}
\label{sec:laf-collecting-semantics}

Because variable definitions are ordered, we can define an operational
semantics of LAF terms. LAF can be thus seen as a nondeterministic,
functional language. A semantics of LAF given as a small-step
structural operational semantics \cite{plotkin1981structural} exists
(see Appendix \ref{sec:laf-small-step-operational-semantics}). But
giving the collecting semantics for the term directly is actually
simpler (see Figure \ref{fig:direct-collecting-semantics}); this
illustrates the fact that LAF is a logic adequate for a symbolic
description of a collecting semantics.

The semantics is defined using a \emph{collecting evaluation} function
\\\mbox{$\llbracket \cdot \rrbracket: Context \times Env \to \mathcal{P}(Env)$}.
It takes a context \(C \in Context\) remaining to be evaluated, an
\emph{environment} \(\Gamma \in Env\) corresponding to a possible evaluation
so far (more precisely, \(\Gamma\) should contain a mapping for all the
free variables in \(C\)); and returns all the possible corresponding
environments when \(C\) is fully evaluated. To evaluate a closed context
(with no free variables), we pass the empty environment \(\varepsilon\).

The collecting semantics essentially computes the set of all possible
environments using meet-over-all-paths
\cite{nielson1999principles}. For non-deterministic constructs
(\emph{nondet}, \emph{unknown}, and \(\mu\)), every possible choice is fully
evaluated in isolation, before taking the union of the possible
outcomes. \(\mu\) defines a local fixpoint of all the values obtained by
an arbitrary number of iterations of the loop body \(C_b[x']\), before
making a nondeterministic choice of one of these values. In other
words, \(\mu\) is as an operator executing the loop body a
nondeterministic number of times.

Intuitively, "assume(false,x)" is used to "kill" a part of the
evaluation. A particularity of LAF is that the same evaluation can
have some parts killed and some live (e.g. only one of the branch of a
conditional is live). For instance, the desired result of evaluating
the term \("\textrm{let\ } x \triangleq \textrm{assume}(false,1); \textrm{let\ }
y \triangleq 2; y"\) is 2. This is achieved using a special \(\bot\) value to
represent "killed" values. \(\bot\) propagates throughout theory
operations: e.g. \(\bot + 3 = \bot\). 

\maybe{This is unusual: generally such
"killing" is handled by not having a transition from the "evaluation
killing" instruction.  but this would lead to an undesired behavior for
the term "let x = assume(false,1) in let y = 2 in y".}

\maybe{Also note that the collecting semantics is a set, but it is possible to encode traces using a theory of sequences: the framework is not limited to invariant or relational properties.}

\section{Abstract evaluation}
\label{sec:org7d544d7}
\label{sec:abstr-evalua}

\subsection{Definition}
\label{sec:orgb9fc3e9}
\label{sec:abstr-evalua-definition}

\done{Definir avant Context,Env,Var, pour leur rajouter des chapeaux}

We name \emph{abstract evaluation} the abstract interpretation of
functional LAF terms. An abstract domain is mainly composed of an
\emph{abstract evaluation function} \(\llbracket \cdot \rrbracket^\sharp\)
that evaluates contexts into \emph{abstract states}; and a \emph{concretization}
function \(\gamma\) that gives a meaning to an abstract state as a set
of environments. The main soundness rule is that the abstract state is
always a superset of the set of possible environments of the context,
as defined by the collecting semantics.

To allow the incremental building of the abstract value, the abstract
evaluation function takes as an argument the abstract value computed
so far, which overapproximates the set of environments containing the
free variables of the context that remains to be evaluated. More
formally:
\begin{definition}[Abstract Domain] An abstract domain is given by a quadruple $\langle Env^\sharp, \llbracket \cdot \rrbracket^\sharp, \varepsilon^\sharp, \gamma: Env^\sharp \to \mathcal{P}(Env) \rangle$ where:
\vspace{-2.5mm}
\begin{itemize}
\item $Env^\sharp$ is a set, the set of \emph{abstract states} (also called \emph{abstract environments});
\item $\llbracket \cdot \rrbracket^\sharp \in Context \times Env^\sharp \to Env^\sharp$ is the \emph{abstract evaluation function};
\item $\varepsilon^\sharp \in Env^\sharp$ is the \emph{initial abstract state};
\item $\gamma \in Env^\sharp \to \mathcal{P}(Env)$ is the \emph{concretization function}.
\end{itemize}
\label{def:abs-domain}
\end{definition}
\begin{definition}[Soundness of Abstract Domains]
An abstract domain \\ \mbox{$\langle Env^\sharp, \llbracket \cdot \rrbracket^\sharp, \varepsilon^\sharp, \gamma: Env^\sharp \to \mathcal{P}(Env) \rangle$} is \emph{sound} if the following rules apply:\\
\indent 1. Soundness of the initial state: $\varepsilon \in \gamma(\varepsilon^\sharp)$ \\
\indent 2. Soundness of abstract evaluation: \ $\displaystyle \forall C, \Gamma^\sharp: \ \bigcup_{\Gamma \in \gamma(\Gamma^\sharp)} \llbracket C \rrbracket(\Gamma) \subseteq \gamma(\llbracket C \rrbracket^\sharp(\Gamma^\sharp)) $
\label{def:sound-abs-domain}
\end{definition}
Note how these definitions have replaced the traditional lattice-based
structure of abstract domains by the sole evaluation of operators in
the logic. Section \ref{sec:non-rel-abs-int} shows that traditional
abstract domains fit into this definition; while Section
\ref{sec:simple-rewriting-abs-int} shows how the definition allows to
also incorporate techniques not traditionally seen as abstract
domains, for instance term rewriting.

\subsection{Numerical abstract interpretation}
\label{sec:orgb2ee4a1}
\label{sec:non-rel-abs-int}

A non-relational abstract domain is a mapping from variables to an
abstraction of the value of this variable, that we call \emph{abstract
value}. Example of abstract values include intervals, congruences
\cite{granger1989congruences}, the flat lattice of constants, the
powerset of \(\{ \mathrm{true}, \mathrm{false} \}\)\ldots{} They have a
lattice-based structure and can be combined using the operators of the
LAF theories (e.g. \(\dot{+}\) denotes the addition of two intervals).

\begin{definition}[Abstract value] An \emph{abstract value} $\mathcal{V}$ is a pair $\langle L_\mathcal{V}, \gamma_{\mathcal{V}}\rangle$ where:
\\ \textcolor{white}{-} 1. $L_\mathcal{V}$ is a lattice, equipped with join ($\dot\sqcup$), inclusion ($\dot\sqsubseteq$), widening ($\dot\nabla$) operations, as well as abstractions of theory operations $\dot{op}$
\\ \textcolor{white}{-} 2. $\gamma_\mathcal{V}: L_\mathcal{V} \to \mathcal{P}(Values)$ concretizes elements of the lattice into a set of values;
\\ \textcolor{white}{-} 3. For every $n$-ary operation $\dot{op}$ over lattice elements:\\ $\displaystyle \gamma_\mathcal{V}(\dot{op}(L_1,\ldots,L_n)) \supseteq \{ op(x_1,\ldots,x_n) : x_1 \in \gamma_\mathcal{V}(L_1),\ldots, x_n \in \gamma_\mathcal{V}(L_n) \} $
\\ \textcolor{white}{-} 4. $\displaystyle \gamma_\mathcal{V}(L_1 \mathbin{\dot\sqcup} L_2) \supseteq \gamma_\mathcal{V}(L_1) \cup \gamma_\mathcal{V}(L_2) $
\end{definition}

Figure \ref{fig:non-relational-abstract-domain} presents a
non-relational abstract domain equipped with such an abstract value,
i.e. a concretization and an abstract evaluation algorithm which
computes the over-approximation of the possible values of a LAF
term. 

\begin{figure}[htbp]

\[
\begin{array}{l@{\hspace{2mm}}l}
\varepsilon^\sharp = & \varepsilon \\
\llbracket [] \rrbracket^\sharp(\Gamma^\sharp) = & \Gamma^\sharp \\
\llbracket \textrm{let\ } x \triangleq op_n(x_1,\ldots,x_n) \textrm{\ in\ } C \rrbracket^\sharp(\Gamma^\sharp) = & \llbracket C \rrbracket^\sharp( \Gamma^\sharp[x \mapsto \dot{op}(\Gamma^\sharp[x_1],\ldots,\Gamma^\sharp[x_n])]) \\
\llbracket \textrm{let\ } x \triangleq \textrm{nondet}(x_1,x_2) \textrm{\ in\ } C \rrbracket^\sharp(\Gamma^\sharp) = & \llbracket C \rrbracket^\sharp( \Gamma^\sharp[x \mapsto \Gamma^\sharp[x_1] \mathbin{\dot{\sqcup}} \Gamma^\sharp[x_2]]) \\
\llbracket \textrm{let\ } x \triangleq \textrm{unknown}() \textrm{\ in\ } C \rrbracket^\sharp(\Gamma^\sharp) = & \llbracket C \rrbracket^\sharp( \Gamma^\sharp[x \mapsto \dot{\top}]) \\
\llbracket \textrm{let\ } x \triangleq \textrm{assume}(x_c,x_v) \textrm{\ in\ } C \rrbracket^\sharp(\Gamma^\sharp) = & \llbracket C \rrbracket^\sharp( \Gamma^\sharp[x \mapsto \Gamma^\sharp[x_v]]) \\
\llbracket \textrm{let\ } x \triangleq \muterm{x_s}{C_b}{x_e}{x_i} \textrm{\ in\ } C \rrbracket^\sharp(\Gamma^\sharp) =
& \textrm{let\ } L_i = \Gamma^\sharp[x_i] \textrm{\ in} \\
& \textrm{let\ rec\ \tt fix}(L) = \\
& \quad \textrm{let\ } \Gamma^\sharp_s = \Gamma^\sharp[x_s \mapsto L] \textrm{\ in} \\
& \quad \textrm{let\ } \Gamma^\sharp_b = \llbracket C_b \rrbracket^\sharp(\Gamma^\sharp_s) \textrm{\ in} \\
& \quad \textrm{let\ } L' = \Gamma^\sharp_b[x_e] \mathbin{\dot{\sqcup}} L_i \textrm{\ in} \\
& \quad \textrm{if\ } (L' \mathbin{\dot{\sqsubseteq}} L) \textrm{\ then\ } L \textrm{\ else\ } \texttt{fix}(L \dot{\nabla} L') \\
& \textrm{in\ } \\
& \llbracket C \rrbracket^\sharp(\Gamma^\sharp[x \mapsto \texttt{fix}(L_i)])
\end{array}
\]

\[ \Gamma \in \gamma(\Gamma^\sharp) \quad\Leftrightarrow\quad \forall (x \mapsto v) \in \Gamma:\ v \in \gamma_{\mathcal{V}}(\Gamma^\sharp[x]) \]

\[ Env^\sharp = Var \to L_\mathcal{V} \]

\caption{A non-relational abstract domain, parametrized by an abstract value $\mathcal{V}$}
\label{fig:non-relational-abstract-domain}

\end{figure}

\todo{Donner une notion de soundness d'un domaine abstrait, et l'utiliser ici plutot. I.e. dire que le domaine abstrait de la figure est sound.}

\begin{theorem} Given that $\mathcal{V}$ is an abstract value, the quadruple $\langle Env^\sharp, \llbracket \cdot \rrbracket^\sharp, \varepsilon^\sharp, \gamma \rangle$ of Figure \ref{fig:non-relational-abstract-domain} is a sound abstract domain.
\end{theorem}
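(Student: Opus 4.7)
The statement decomposes into the two soundness obligations from Definition~\ref{def:sound-abs-domain}. Condition~1 ($\varepsilon \in \gamma(\varepsilon^\sharp)$) is immediate, since $\varepsilon^\sharp = \varepsilon$ and the concretization condition "$\forall (x \mapsto v) \in \Gamma:\ v \in \gamma_\mathcal{V}(\Gamma^\sharp[x])$" is vacuously true on the empty environment. So the real content is Condition~2, which I would prove by structural induction on the context~$C$.

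\textbf{Base and easy inductive cases.} For $C = []$, both evaluations are identity and the inclusion $\bigcup_{\Gamma \in \gamma(\Gamma^\sharp)} \{\Gamma\} \subseteq \gamma(\Gamma^\sharp)$ is trivial. The non-$\mu$ inductive cases all reduce to a single-step soundness lemma of the form "$\bigcup_{\Gamma \in \gamma(\Gamma^\sharp)} \{\Gamma[x \mapsto v]\} \subseteq \gamma(\Gamma^\sharp[x \mapsto L])$" for the appropriate $v$ and $L$, followed by the induction hypothesis applied on the continuation context. For $op_n$, I invoke property~3 of abstract values to get $op(\Gamma[x_1],\ldots) \in \gamma_\mathcal{V}(\dot{op}(\Gamma^\sharp[x_1],\ldots))$. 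For \emph{nondet} I use property~4, giving $\gamma_\mathcal{V}(L_1 \mathbin{\dot\sqcup} L_2) \supseteq \gamma_\mathcal{V}(L_1) \cup \gamma_\mathcal{V}(L_2)$, which covers both branches of the concrete union. For \emph{unknown} the sort $\mathcal{S}$ concretizes into $\gamma_\mathcal{V}(\dot\top)$ by the top element of the lattice. For \emph{assume}, I observe that the abstract rule conservatively ignores the condition and simply propagates $\Gamma^\sharp[x_v]$; this over-approximates both the "true" branch (identity) and the "false" branch (where $x \mapsto \bot$), provided we adopt the standard convention that $\bot \in \gamma_\mathcal{V}(L)$ for every lattice element~$L$ (or, equivalently, that bottoms are removed from the concretization relation).

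\textbf{The $\mu$ case, which is the main obstacle.} Here I need to show that \texttt{fix}$(L_i)$ over-approximates the concrete lfp $S$ computed by the collecting semantics; then the induction hypothesis applied to~$C$ finishes the case. The argument has two parts. First, termination: the sequence $L, L \mathbin{\dot\nabla} L', \ldots$ produced by repeated calls to \texttt{fix} stabilizes because $\dot\nabla$ is a widening over $L_\mathcal{V}$, so \texttt{fix}$(L_i)$ is well-defined and returns some $L^\star$ satisfying $\llbracket C_b \rrbracket^\sharp(\Gamma^\sharp[x_s \mapsto L^\star])[x_e] \mathbin{\dot\sqcup} L_i \mathbin{\dot\sqsubseteq} L^\star$. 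Second, soundness: I introduce the auxiliary concrete fixpoint functional from Figure~\ref{fig:direct-collecting-semantics} and show, by induction on the Kleene iteration, that every concrete iterate $S_k$ satisfies $\{v : v \in S_k\} \subseteq \gamma_\mathcal{V}(L^\star)$. The base case $\{\Gamma[x_i]\} \subseteq \gamma_\mathcal{V}(L_i) \subseteq \gamma_\mathcal{V}(L^\star)$ follows from $\Gamma \in \gamma(\Gamma^\sharp)$ and the post-fixpoint inequality. The inductive step applies the induction hypothesis on $C_b$ (the outer structural IH) to bound $\llbracket C_b \rrbracket(\Gamma[x_s \mapsto v])$ by $\gamma(\llbracket C_b \rrbracket^\sharp(\Gamma^\sharp[x_s \mapsto L^\star]))$, reads off component $x_e$, joins with $L_i$, and finally uses the post-fixpoint property once more. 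Taking the union over all $k$ yields $S \subseteq \gamma_\mathcal{V}(L^\star)$, from which the final application of the induction hypothesis to the trailing context~$C$ concludes.

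\textbf{Anticipated difficulty.} The delicate point is threading the structural induction hypothesis on $C_b$ through the nested Kleene induction of the $\mu$ case, and carefully lining up the roles of $L_i$ (which is joined \emph{inside} the concrete \texttt{fix} as $\{\Gamma[x_0]\}$ and \emph{inside} the abstract \texttt{fix} as the extra $\mathbin{\dot\sqcup} L_i$) so that the post-fixpoint inequality matches the concrete iteration step exactly. Everything else is a routine, case-by-case verification that the abstract transformers are pointwise sound with respect to the collecting rules of Figure~\ref{fig:direct-collecting-semantics}.
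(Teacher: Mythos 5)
Your proof is correct and follows exactly the route the paper intends: the paper gives no explicit proof, only the informal remark that each evaluation step applies a sound lattice operation (with \emph{nondet} as join and $\mu$ as a local widened fixpoint), and your structural induction with the Kleene-iteration argument for the $\mu$ case is the natural formalization of that sketch. You also correctly flag the one real subtlety (the treatment of $\bot$ in the \emph{assume} case, which the paper glosses over); the only unstated assumption you lean on silently is monotonicity of $\gamma_\mathcal{V}$ with respect to $\dot\sqsubseteq$ when passing from the post-fixpoint inequality to an inclusion of concretizations, which is standard but not listed in the paper's definition of abstract value.
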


The algorithm is quite straightforward: the abstract value is a
standard mapping from program variables to an abstract value
representing their possible values. Every evaluation step applies a
lattice operation \({\dot{op}}\), except \emph{nondet} which requires a
join, and \(\mu\) for which we do a local fixpoint computation. \emph{assume}
is ignored (for more precision, we could map \(x\) to \(\bot\) when we
detect that the condition cannot be true); Section \ref{sec:constraint-propag-abs-domain} explains
how the domain can be extended to handle \emph{assume}.

The complexity of this implementation is \emph{optimal}. Indeed, the
\(\Gamma^\sharp\) environment can be implemented using a single array
(using variables as the indices), which means that environment update
and lookup have \(\mathcal{O}(1)\) complexity. If we assume that the
abstract value of tuple values is represented as a tuple of scalar
abstract values, and operations on scalar abstract values
(e.g. interval) have \(\mathcal{O}(1)\) complexity, then all operations
have \(\mathcal{O}(1)\) complexity, except joining and widening tuples
of length \(n\), which have \(\mathcal{O}(n)\) complexity. Now, the length
of the tuple depends on how it was generated; but if it was generated
so that it contains only the variables that differ, then the
complexity of these operations is \(\mathcal{O}(\Delta)\), where
\(\Delta\) is the number of variables modified in a loop or in a
conditional. This improves on the complexity of analyzers using the
traditional interface, for which the complexity of these operations
depends (logarithmically) on the number of variables
\cite{blanchet2002design}.

\begin{theorem} It is possible to lift traditional, lattice-based relational abstract domains to sound abstract domains following Definition \ref{def:abs-domain}.
\end{theorem}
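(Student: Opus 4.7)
The plan is to give an explicit construction. Fix a traditional relational abstract domain $A$ equipped with the usual lattice machinery $(\sqsubseteq_A, \sqcup_A, \nabla_A, \bot_A, \top_A)$, transfer functions for assignment and guard ($\llbracket x := e \rrbracket^\sharp_A$, $\llbracket \textrm{guard}(c) \rrbracket^\sharp_A$), and concretization $\gamma_A$ mapping an element of $A$ to a set of stores over some set of abstract variables. Since LAF variables are \emph{values} (never overwritten), we can treat each newly defined LAF variable as a \emph{fresh} dimension of $A$; standard relational domains either support dynamic addition of variables, or we can embed all LAF variables that will ever appear into a sufficiently large static set. Set $Env^\sharp := A$, $\varepsilon^\sharp := \top_A$, and let $\gamma(a) := \{\Gamma \in Env : \Gamma \in \gamma_A(a)\}$ up to the obvious variable renaming.

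Next, I would define $\llbracket \cdot \rrbracket^\sharp$ by structural recursion on the context, mimicking Figure \ref{fig:non-relational-abstract-domain} but with the relational operators of $A$: the empty context $[]$ is the identity; $\textrm{let\ } x \triangleq op_n(\bar{x})$ becomes the relational assignment $\llbracket x := op_n(\bar{x}) \rrbracket^\sharp_A$; $\textrm{let\ } x \triangleq \textrm{nondet}(x_1,x_2)$ becomes the join of two copies $\llbracket x := x_1 \rrbracket^\sharp_A(\Gamma^\sharp) \sqcup_A \llbracket x := x_2 \rrbracket^\sharp_A(\Gamma^\sharp)$; $\textrm{let\ } x \triangleq \textrm{unknown}_\mathcal{S}()$ introduces $x$ at $\top$; $\textrm{let\ } x \triangleq \textrm{assume}(x_c,x_v)$ applies the guard transfer function for $x_c = \textrm{true}$ followed by the assignment $x := x_v$; and the $\mu$ case is handled exactly as in Figure \ref{fig:non-relational-abstract-domain}, i.e.\ by iterating the abstract evaluation of $C_b$ under a widened invariant until post-fixpoint stabilization. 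Each of these cases directly uses a primitive already required of a lattice-based relational abstract domain.

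Soundness is then proved by structural induction on the context $C$, with the inductive invariant being exactly condition 2 of Definition \ref{def:sound-abs-domain}. The empty-context case is trivial. For each \textrm{let}-case, the inductive step factors into two parts: first, the local abstract step soundly over-approximates the corresponding concrete semantic clause of Figure \ref{fig:direct-collecting-semantics}, which follows from the soundness of the underlying assignment, join, guard, and $\top$-introduction transfer functions of $A$; second, the induction hypothesis applied to the tail context $C$ yields the conclusion. The non-deterministic constructs use precisely the property that $\gamma_A$ commutes (up to $\supseteq$) with $\sqcup_A$, and the $\textrm{assume}$ case uses soundness of the concrete clause's guarded update together with the fact that $\bot$-propagation in the concrete is absorbed by the abstract guard (i.e.\ nothing is missed when the concrete branch is killed).

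The main obstacle will be the $\mu$ case. I would argue it following the standard Cousot--Cousot fixpoint transfer theorem: the concrete semantics defines the set $S$ as $\textrm{lfp}$ of a monotone operator $\tt fix$ whose iterates are $\{ \Gamma[x_0] \}$, then the values reachable after one unrolling of $C_b$, and so on. The abstract $\tt fix$ iterates an operator $L \mapsto L_i \sqcup_A \llbracket C_b \rrbracket^\sharp(\Gamma^\sharp[x_s \mapsto L])[x_e]$, whose soundness as an over-approximation of the concrete iterate follows from the induction hypothesis applied to $C_b$ together with soundness of $\sqcup_A$. The widening $\nabla_A$ ensures termination and, by the usual widening soundness property, yields a post-fixpoint whose concretization contains every concrete iterate, hence contains $S$. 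A minor technical subtlety worth checking carefully is the handling of the fresh bound variable $x_s$ of $\mu$ (it must not clash with names already bound in $\Gamma^\sharp$) and the fact that the LAF environment $\Gamma$ must be extended consistently on both sides of the $\gamma$ inclusion; alpha-renaming and the freshness discipline on LAF variables dispose of this.
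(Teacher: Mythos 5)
Your overall architecture differs from the paper's: you take $Env^\sharp := A$, a \emph{single} global relational element, whereas the paper (Appendix~\ref{sec:lifting-traditional-relational-abstract-domains}, Figure~\ref{fig:conversion-from-usual-domains}) keeps $Env^\sharp = Var \to A$ and attaches a \emph{separate} relational element $D$ to each LAF variable, describing only the relations between that variable and its transitive dependencies. This is not a cosmetic difference, and it is exactly where your proof breaks: your treatment of $\textrm{assume}$ is unsound with respect to the collecting semantics of Figure~\ref{fig:direct-collecting-semantics}. When $\Gamma[x_c] = \textrm{false}$, the concrete semantics does \emph{not} discard the environment; it continues evaluating the tail context with $x \mapsto \bot$, and the resulting environments (in which $x_c$ is false and all other variables carry ordinary values) remain in $\llbracket C \rrbracket(\Gamma)$. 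If you apply the guard transfer function $\{|\textrm{assume}\ x_c|\}$ to your single global state, its concretization no longer contains any environment with $x_c = \textrm{false}$, so condition~2 of Definition~\ref{def:sound-abs-domain} fails on the very first $\textrm{let}$ after the assume. Your parenthetical claim that ``nothing is missed when the concrete branch is killed'' is precisely the point that is false here: killed branches are not removed from the set of possible environments, only the killed \emph{variable} is marked $\bot$. The paper sidesteps this by guarding only the element $D$ associated with the assume-defined variable (leaving the elements of unrelated variables untouched), and by a concretization that only demands agreement on non-$\bot$ bindings; note that even the non-relational domain of Figure~\ref{fig:non-relational-abstract-domain} deliberately \emph{ignores} the assume condition for this reason.

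A secondary gap is in the $\mu$ case. In a relational lift, the element you widen at the loop head relates $x_e$ to the loop-body-local variables of the \emph{previous} iteration; those variables are rebound on every iteration of $C_b$, so the relations must be projected out before joining back with the initial value. The paper's appendix does this explicitly with its $\texttt{killall}$ operation; your sketch, which defers to the non-relational Figure~\ref{fig:non-relational-abstract-domain} where the issue does not arise (each variable's abstract value there is independent), omits it. Both problems are repairable --- drop the global guard (or join the guarded branch with a havoced $x_c{=}\textrm{false}$ branch), and havoc loop-local dimensions before widening --- but as written the construction does not satisfy Definition~\ref{def:sound-abs-domain}.
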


With minor modifications, the non-relational abstract domain of the
previous section can be used to lift traditional abstract domains to
our framework: it suffices to map memory variables to the usual
lattices representing the entire memory. For instance, we can evaluate
the definition \\{$"\textrm{let\ } M' = \textrm{store}(M, \texttt{"x"}, 
\textrm{load}(M,\texttt{"x"}) + 1)"$\ } using the affine equality abstract
domain \cite{karr1976affine}, associating elements of this domain to
\(M\) and \(M'\). 

It is also possible to use traditional abstract domains to relate
variables of a (numeric) LAF term; Appendix
\ref{sec:lifting-traditional-relational-abstract-domains} details how
this lifting is done. It follows the same pattern as the
non-relational lift: evaluation of "nondet" corresponds to join, and
evaluation of \(\mu\) corresponds to inclusion testing and widening.

\subsection{Rewriting-based abstract interpretation}
\label{sec:org455f158}
\label{sec:simple-rewriting-abs-int}

This section provides a simple example of an abstract domain whose
abstract state is based on a LAF context, instead of a lattice. 

The abstract domain is based on term rewriting. It performs a dynamic
translation from a source term to a destination term, used as the
abstract state; the concretization of this state is simply its
collecting semantics. The domain is parametrized by a set of term
rewriting rules \(\mathcal{R}\).

\begin{figure}[htbp]

\[
\begin{array}{l@{\hspace{2mm}}l}
\varepsilon^\sharp = & [] \\
\llbracket [] \rrbracket^\sharp( C^\sharp) = & C^\sharp \\
\llbracket \textrm{let\ } x \triangleq op(x_1,\ldots,x_n) \textrm{\ in\ } C \rrbracket^\sharp(C^\sharp) = & \llbracket C \rrbracket^\sharp( C^\sharp[\textrm{let\ } x \triangleq \mathcal{R}({op}(x_1,\ldots,x_n))]) \\
\llbracket \textrm{let\ } x \triangleq \textrm{nondet}(x_1,x_2) \textrm{\ in\ } C \rrbracket^\sharp(C^\sharp) = & \llbracket C \rrbracket^\sharp( C^\sharp[\textrm{let\ } x \triangleq \mathcal{R}({\textrm{nondet}}(x_1,x_2))]) \\
\llbracket \textrm{let\ } x \triangleq \textrm{unknown}_{\mathcal{S}}() \textrm{\ in\ } C \rrbracket^\sharp(C^\sharp) = & \llbracket C \rrbracket^\sharp( C^\sharp[\textrm{let\ } x \triangleq \mathcal{R}({\textrm{unknown}_{\mathcal{S}}}())]) \\
\llbracket \textrm{let\ } x \triangleq \textrm{assume}(x_c,x_v) \textrm{\ in\ } C \rrbracket^\sharp(C^\sharp) = & \llbracket C \rrbracket^\sharp( C^\sharp[\textrm{let\ } x \triangleq \mathcal{R}({\textrm{assume}}(x_c,x_v))]) \\
\llbracket \textrm{let\ } x \triangleq \muterm{x_s}{C_b}{x_e}{x_i} \textrm{\ in\ } C \rrbracket^\sharp(C^\sharp) = 
& \textrm{let\ } C_b^\sharp = \llbracket C_b \rrbracket^\sharp([]) \textrm{\ in} \\
& \llbracket C \rrbracket^\sharp( C^\sharp[\textrm{let\ } x \triangleq  \mathcal{R}(\muterm{x_s}{C_b^\sharp}{x_e}{x_i})])


\end{array}
\]

\[ \Gamma \in \gamma(C^\sharp) \quad\Leftrightarrow\quad \Gamma \in \llbracket C^\sharp \rrbracket(\varepsilon) \]

\caption{Rewriting abstract domain parametrized by a term rewriting system $\mathcal{R}$}
\label{fig:term-rewriting-abstract-domain}

\end{figure}

We give term rewriting rules using pattern matching, on a
representation of terms where \(\textrm{let}\) bindings have been
inlined. Simple term rewriting rules include \(x \land x \to x\), \(1 * x
\to x\), \(2 + x + 3 \to x + 5\). However, as is, rules such as \(0 * x
\to 0\) or \(x - x \to 0\) are not correct for this domain. Indeed,
consider the terms:

\("\textrm{let\ } u \triangleq \textrm{unknown}_\mathbb{Z} \textrm{\ in\ let\ } v
\triangleq \textrm{assume}(u \ne 3, u) \textrm{\ in\ let } s \triangleq 0 * v \textrm{\
in s}"\).

and the result of applying the rule \(0 * x \to 0\) to it:

\("\textrm{let\ } u \triangleq \textrm{unknown}_\mathbb{Z} \textrm{\ in\ let\ } v
\triangleq \textrm{assume}(u \ne 3, u) \textrm{\ in\ let } s \triangleq 0 \textrm{\
in s}"\).

The possible environment \([u \mapsto 3, v \mapsto \bot, s \mapsto
\bot]\) for the first term has been replaced by the environment \([u
\mapsto 3, v \mapsto \bot, s \mapsto 0]\) in the second. One way to
deal with this issue is to propagate the "assume" conditions in the
replacement. Another is to accept these "over-approximating rewrites",
by changing the definition of \(\gamma\) to:
\begin{equation} \label{ref:eq-term-rewriting-over-approximating-gamma}
\Gamma \in \gamma(C^\sharp) \quad\Leftrightarrow\quad \exists \Gamma' \in \llbracket C^\sharp \rrbracket(\varepsilon): \forall x \in \Gamma: \Gamma[x] = \Gamma'[x] \lor \Gamma[x] = \bot 
\end{equation}
The latter allows very aggressive rewrites, such as \(x/x \to 1\) or \(1
/x < 2 \to \textrm{true}\), that disregard side conditions, usually a major difficulty of term-rewriting.

The soundness of both "exact" and "over-approximating" abstract
domains is established by considering only the term rewriting rules:

\todo{Déplacer une partie de tout ceci en annexe, en ayant un théorème simplifié: le domaine abstrait est correct si chacun des rewrite est correct}

\begin{definition} A term rewriting rule $l \to r$ is an \emph{exact} rewrite if, for any substitution $\sigma$ of the free variables of $l$, the evaluation of $l\sigma$ 
equals the evaluation of $r\sigma$. It is \emph{over-approximating} if they are equal when the evaluation of $l\sigma$ does not return $\bot$.
\end{definition}

\begin{theorem} If every term rewriting rule in $\mathcal{R}$ is \emph{exact}, then the abstract domain of Figure \ref{fig:term-rewriting-abstract-domain} is sound.
If every term rewriting rule in $\mathcal{R}$ is \emph{over-approximating}, then the abstract domain of Figure \ref{fig:term-rewriting-abstract-domain}, with $\gamma$ given by equation \ref{ref:eq-term-rewriting-over-approximating-gamma}, is sound.
\end{theorem}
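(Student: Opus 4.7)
The plan is to prove both claims by structural induction on the input context $C$, relying on a compositionality lemma for rewrites. The initial-state condition $\varepsilon \in \gamma(\varepsilon^\sharp)$ is immediate: since $\varepsilon^\sharp = []$ and $\llbracket [] \rrbracket(\varepsilon) = \{\varepsilon\}$, the environment $\varepsilon$ lies in $\gamma([])$ under both the exact and the over-approximating definitions of $\gamma$.

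For the main soundness inequality, I would first establish a \emph{contextual substitution lemma}: if $l \to r$ is an exact rewrite and $D^\sharp$ is obtained from $C^\sharp$ by replacing one occurrence of a right-hand side matching $l\sigma$ with $r\sigma$, then $\llbracket C^\sharp \rrbracket(\varepsilon) = \llbracket D^\sharp \rrbracket(\varepsilon)$. For an over-approximating rule, this equality is weakened to allow the rewrite to turn a $\bot$-bound variable into a live binding (and to propagate this change to downstream bindings), which is exactly what equation~(\ref{ref:eq-term-rewriting-over-approximating-gamma}) tolerates. The lemma is proved by induction on the portion of $C^\sharp$ following the rewritten binding, using that each semantic constructor (theory operations, \emph{nondet}, \emph{assume}, $\mu$) respects equality (resp. the ``equal-or-became-live'' relation) on environments.

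With this lemma available, the soundness condition $\bigcup_{\Gamma \in \gamma(\Gamma^\sharp)} \llbracket C \rrbracket(\Gamma) \subseteq \gamma(\llbracket C \rrbracket^\sharp(\Gamma^\sharp))$ is then proved by structural induction on $C$. The base case $C=[]$ is immediate from the definitions of $\llbracket\cdot\rrbracket^\sharp$, $\llbracket\cdot\rrbracket$, and $\gamma$. For each non-$\mu$ let-binding step, the abstract evaluation produces $C^\sharp[\textrm{let\ } x \triangleq \mathcal{R}(\cdot)]$; the contextual lemma equates (resp. relates) its collecting semantics with that of $C^\sharp[\textrm{let\ } x \triangleq \cdot]$, and the induction hypothesis on the tail $C$ closes the step.

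The main obstacle is the $\mu$ case, because the abstract evaluation first recursively computes $C_b^\sharp = \llbracket C_b \rrbracket^\sharp([])$ and wraps $\mu$ around this \emph{abstract} body before applying $\mathcal{R}$. I would handle it with an auxiliary monotonicity lemma: if $\gamma(C_b^\sharp) \supseteq \bigcup_{\Gamma \in \gamma([])} \llbracket C_b \rrbracket(\Gamma)$, then the collecting semantics of $\muterm{x_s}{C_b^\sharp}{x_e}{x_i}$ over-approximates that of $\muterm{x_s}{C_b}{x_e}{x_i}$ in every environment. This follows because the \texttt{fix} operator of Figure~\ref{fig:direct-collecting-semantics} is defined as a union over the body's collecting semantics, which is monotone in the body; Knaster--Tarski then yields inclusion of the least fixpoints, and the final $\mathcal{R}$-rewrite enters via the contextual lemma as before. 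The delicate point, also present in the \emph{assume} case, is the asymmetry of the over-approximating interpretation: it permits $\bot$ to replace live values but not the converse, so one must verify that neither the fixpoint construction nor the $\bot$-propagation in \emph{assume} can turn a concretely live value into $\bot$ in the abstract.
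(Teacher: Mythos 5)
The paper states this theorem without giving a proof, so there is no official argument to compare against; your proposal has to stand on its own. Its overall architecture is the natural one and is essentially right: reduce soundness of the domain to a per-rule semantic-preservation lemma (exact rewrites preserve $\llbracket \cdot \rrbracket$; over-approximating rewrites preserve it up to the ``equal or $\bot$'' relation of equation~(\ref{ref:eq-term-rewriting-over-approximating-gamma})), push that lemma through the surrounding context by showing each LAF constructor respects the relation, and close by structural induction on the input context using the composition identity $\bigcup_{\Gamma \in \llbracket C^\sharp \rrbracket(\Gamma_0)} \llbracket C \rrbracket(\Gamma) = \llbracket C^\sharp[C] \rrbracket(\Gamma_0)$. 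Your identification of the genuinely delicate point --- the asymmetry of equation~(\ref{ref:eq-term-rewriting-over-approximating-gamma}), i.e.\ that a rewrite may revive a $\bot$ but must never kill a live value, checked against $\bot$-propagation through \emph{assume}, \emph{nondet} and the $\mu$ fixpoint --- is exactly the right thing to worry about, and the check goes through because an over-approximating rule agrees with the original whenever the original evaluates to something other than $\bot$.

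The one place where your plan as written does not type-check is the $\mu$ case. You state the hypothesis of your monotonicity lemma as $\gamma(C_b^\sharp) \supseteq \bigcup_{\Gamma \in \gamma([])} \llbracket C_b \rrbracket(\Gamma)$, i.e.\ the soundness condition of Definition~\ref{def:sound-abs-domain} instantiated at the empty abstract state. But $\gamma([]) = \{\varepsilon\}$, and $C_b$ has free variables (at least $x_s$ and the outer variables it captures), so $\llbracket C_b \rrbracket(\varepsilon)$ is not defined and this instance of the induction hypothesis is vacuous. What the $\mu$ case actually needs --- and what your conclusion ``over-approximates \ldots{} in every environment'' silently assumes --- is that $\llbracket C_b \rrbracket(\Gamma_0)$ is over-approximated by $\llbracket C_b^\sharp \rrbracket(\Gamma_0)$ for \emph{every} initial environment $\Gamma_0$ defining the free variables of $C_b$, since the fixpoint re-enters the body with $x_s$ bound to each accumulated value. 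The main induction must therefore be carried out on the generalized claim ``for all $C$, $C^\sharp$ and all initial environments $\Gamma_0$, $\llbracket C^\sharp[C] \rrbracket(\Gamma_0)$ is over-approximated (exactly, resp.\ up to the relation of equation~(\ref{ref:eq-term-rewriting-over-approximating-gamma})) by $\llbracket\, \llbracket C \rrbracket^\sharp(C^\sharp) \,\rrbracket(\Gamma_0)$,'' the theorem being the instance $C^\sharp = []$, $\Gamma_0 = \varepsilon$. This is a routine strengthening, but without it the $\mu$ step of your induction does not go through.
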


\maybe{Permet de détecter des égalités à moindre coût.}

The term rewriting and non-relational domain represent very different
domains, one semantic in nature, the other more symbolic. We will now
see a more complex domains that combine and extend theses two models:
the constraint propagation abstract domain.

\section{A constraint propagation abstract domain}
\label{sec:orgc409c2f}
\label{sec:constraint-propag-abs-domain}

\maybe{ This domain is also central place of
exchange of information between abstract domains (as noted by
\citet{chang2005abstract}).}

\maybe{This domain is also central to exchanging
information between abstract domains; for instance numerical
invariants can be expressed as a mapping from expressions of 
expressible a}

The constraint domain is a good example of the advantages of our
approach: benefiting from the structure of LAF terms (including
targeted join and widening) that allows a single store implementation,
it can propagate and combine semantic information across the whole
program, in an efficient way. It is made of three elements:
\begin{itemize}
\item A LAF context \(C^\sharp\), called the \emph{constraints} (middle of Figure
\ref{fig:constraint-domain-example}). It can be seen as a rewrite of
the input LAF context with particular locations for \emph{assume}
definitions.
\item Two mappings \(\Gamma^\sharp_c\) and \(\Gamma^\sharp_v\) from variables
of the input context to variables of the constraints. This mapping
is made such that \(x\) and
\("\textrm{assume}(\Gamma^\sharp_c(x),\Gamma^\sharp_v(x))"\) evaluate
to the same values (the mapping and input term are not shown on
Figure \ref{fig:constraint-domain-example}).4
\item A mapping \(M\) from each variable of the constraints to a \emph{condition
map}, representing the possible values for the variable according to
some conditions (right of Figure
\ref{fig:constraint-domain-example}). How these conditions are
chosen depends on a strategy; ours is detailed below.
\end{itemize}

\smallskip
\noindent {\bf Generation of constraints\ } The construction of the constraints, \(\Gamma^\sharp_c\) and
\(\Gamma^\sharp_v\) is pretty straightforward (Figure
\ref{fig:constraints-generation}). The general idea is that the
constraints could represent the input term stripped from assume
expressions; the condition is instead stored in the \(\Gamma^\sharp_c\)
map.
 \maybe{these conditions are those used in the condition maps of
$M$.}
Actually we do not strip the assume statements entirely, but delay
them until the end of loops, or just before a "nondet", so that the
term used as the constraints does not lose information with regards
to the input term.

\floatstyle{plain}
\restylefloat{figure}
\begin{figure}[htbp]

\hspace{-9mm}\fbox{\begin{minipage}{1.15\textwidth}
\vspace{-3mm}
\[\begin{array}{l@{\hspace{2mm}}l}

\varepsilon^\sharp = & [] \hfill (x' \textrm{\ is\ fresh})\\
\llbracket [] \rrbracket^\sharp( C^\sharp,\Gamma_v^\sharp,\Gamma_c^\sharp ) = & \langle C^\sharp,\Gamma_v^\sharp,\Gamma_c^\sharp \rangle \\
\llbracket \textrm{let\ } x \triangleq op(x_1,\ldots,x_n) \textrm{\ in\ } C \rrbracket^\sharp( C^\sharp, \Gamma_v^\sharp, \Gamma_c^\sharp ) = 
& \llbracket C \rrbracket^\sharp( \begin{array}[t]{l} C^\sharp[\textrm{let\ } x' \triangleq {op}(\Gamma_v[x_1],\ldots,\Gamma_v[x_n])], \\
                                                     \Gamma_v^\sharp[x \mapsto x'], \Gamma_c^\sharp[x \mapsto \Gamma_c^\sharp[x_1] \land \ldots \land \Gamma_c^\sharp[x_n]]) \end{array} \\
\llbracket \textrm{let\ } x \triangleq \textrm{nondet}(x_1,x_2) \textrm{\ in\ } C \rrbracket^\sharp( C^\sharp, \Gamma_v^\sharp, \Gamma_c^\sharp ) = 
& \llbracket C \rrbracket^\sharp( \begin{array}[t]{l} C^\sharp[\textrm{let\ } x' \triangleq \textrm{nondet}(\begin{array}[t]{l} 
                                                        \textrm{assume}(\Gamma^\sharp_c[x_1],\Gamma^\sharp_v[x_1]),\\ 
                                                        \textrm{assume}(\Gamma^\sharp_c[x_2],\Gamma^\sharp_v[x_2]))], \end{array} \\
                                                     \Gamma_v^\sharp[x \mapsto x'], \Gamma_c^\sharp[x \mapsto \Gamma_c^\sharp[x_1] \lor \Gamma_c^\sharp[x_2]]) \end{array} \\
\llbracket \textrm{let\ } x \triangleq \textrm{unknown}_\mathcal{S}() \textrm{\ in\ } C \rrbracket^\sharp( C^\sharp, \Gamma_v^\sharp, \Gamma_c^\sharp ) = 
& \llbracket C \rrbracket^\sharp( \begin{array}[t]{l} C^\sharp[\textrm{let\ } x' \triangleq \textrm{unknown}_{\mathcal{S}}()], \\
                                                     \Gamma_v^\sharp[x \mapsto x'], \Gamma_c^\sharp[x \mapsto \textrm{true}]) \end{array} \\
\llbracket \textrm{let\ } x \triangleq \textrm{assume}(x_c,x_v) \textrm{\ in\ } C \rrbracket^\sharp( C^\sharp, \Gamma_v^\sharp, \Gamma_c^\sharp ) = 
& \llbracket C \rrbracket^\sharp( \begin{array}[t]{l} C^\sharp, \Gamma_v^\sharp[x \mapsto \Gamma^\sharp_v[x_v]],\\
                                                      \Gamma_c^\sharp[x \mapsto \Gamma_c[x_c] \land \Gamma_c[x_v] \land \Gamma_v[x_c]]) \end{array} \\
\llbracket \textrm{let\ } x \triangleq \muterm{x_s}{C_b}{x_e}{x_i} \textrm{\ in\ } C \rrbracket^\sharp( C^\sharp, \Gamma_v^\sharp, \Gamma_c^\sharp ) = 
& \textrm{let\ } \langle C_b^\sharp,\Gamma_v^{\prime\sharp},\Gamma_c^{\prime\sharp} \rangle = \llbracket C_b \rrbracket^\sharp([], \Gamma_v^\sharp,\Gamma_c^\sharp) \textrm{\ in} \\
& \textrm{let\ } E = "{\textrm{assume}(\Gamma_c^{\prime\sharp}[x_e],\Gamma_v^{\prime\sharp}[x_e])}" \textrm{\ in} \\
& \llbracket C \rrbracket^\sharp( \begin{array}[t]{l} C^\sharp[\textrm{let\ } x' \triangleq \muterm{x_s}{C_b^\sharp}{E}{\Gamma_v^\sharp[x_i]})], \\
                                                     \Gamma_v^\sharp[x \mapsto x'], \Gamma_c^\sharp[x \mapsto \Gamma_c^\sharp[x_i]]) \end{array} \\

\end{array}
\]%
\end{minipage}}
\vspace{-3mm}
\caption{Generation of the constraints $C^\sharp$ and the maps $\Gamma_c^\sharp$ and $\Gamma_v^\sharp$.}
\label{fig:constraints-generation}
\end{figure}
\floatstyle{boxed}
\restylefloat{figure}

\smallskip
\noindent {\bf Initial evaluation\ } 
We now describe the construction of the mapping \(M\) of condition
maps. An \emph{initial evaluation} of an input variable \(x\) is done
concurrently with constraints generation: the possible values for
\(\Gamma^\sharp_v[x]\) is computed using its definition, for the
condition \(\Gamma^\sharp_c[x]\). For instance in Figure
\ref{fig:constraint-domain-example}, the initial evaluation of \(xdiv\)
is done with condition \(c_2\); we first retrieve the abstract value for
\(x\) with condition \(c_2\) (which is \([-8;-1] \dot\sqcup [0;8] =
[-8;8]\)); then the value for \(9\) (which is \([9;9]\)); then create the
binding \(xdiv \mapsto c_2 \Vdash [0;0]\), where \([0;0] =
[-8;8]\dot/[9;9]\).

\smallskip
\noindent {\bf Constraint propagation\ }
When an \(\textrm{assume}(c,x)\) definition is evaluated in the input
term, we perform a \emph{constraint propagation}. It consists in
reevaluating definitions such as \("\textrm{let\ } c \triangleq x < 0"\),
refining and using the information attached to the variables
corresponding to the result and arguments of the operator. The
algorithm is similar to constraint satisfaction algorithms such as
AC-3 \cite{mackworth1977consistency}; but we maintain, together with
the worklist of variables whose abstract value has changed, the
condition for which they have changed. The constraint propagation of
condition \(c\) is initiated by adding the binding \(c \mapsto c \Vdash
\{true \}\).  In Figure \ref{fig:constraint-domain-example}, a first
chain of constraints propagation is the addition of the bindings \(c_1
\mapsto c_1 \Vdash \{true \}\), then \(x \mapsto c_1 \Vdash
[{-}\!\infty;{-}1]\).

When an \(\textrm{assume}(c',\ldots)\) constraint is traversed, the \(c'\)
condition is added as a conjunct to the condition being
propagated. This is seen in the addition of \(c_1\) in the constraint
propagation chain \(c_2 \mapsto c_2 \Vdash \{true \}\), \(abs \mapsto c_2
\Vdash [0;8]\), \(nx \mapsto c_1 \land c_2 \Vdash [1;8]\), \(x \mapsto c_1
\land c_2 \Vdash [{-}8;{-}1]\).

The constraint propagation phase terminates if the lattices used in
the abstract value cannot be indefinitely refined (e.g. refining \(y
\le y/2\) using an interval of rational numbers). But it is always
sound to limit the number of propagation, and we evaluate different
heuristics in Section \ref{sec:experimental-evaluation}.

\smallskip
\noindent {\bf Loops\ }
are handled like in the non-relational analysis of Section
\ref{sec:non-rel-abs-int}; the condition map, seen as a function
lattice from conditions to abstract values, is used to join, widen and
test for inclusion the loop input and output. The only difference is
that the conditions defined inside the loop body do not have any
meaning in the next iteration, or outside of the loop; thus as soon as
the loop body has been fully evaluated, these conditions are
removed by existential quantification.

\smallskip
\noindent {\bf Concretization\ } 
The concretization is best defined as the composition of two
parts. The first relates the input term to the generated constraints,
and is similar to the one of rewriting-based abstract domains of
Section \ref{sec:simple-rewriting-abs-int}. 
\maybe{Note that, in the
implementation described above, the input term and the constraints are
actually equivalent: this "pre-processing step" does not lose any
information.}
\begin{multline*}
\Gamma \in \gamma_1(\langle C^\sharp,\Gamma^\sharp_c,\Gamma^\sharp_v \rangle) \ \Leftrightarrow\\[-6pt] \exists \Gamma' \in \llbracket C^\sharp \rrbracket(\varepsilon): \forall x \mapsto v \in \Gamma:
\left\{ \begin{array}{l@{\hspace{4mm}}r} \Gamma'[\Gamma^\sharp_c[x]] = \textrm{true} \ \land\ \Gamma'[\Gamma^\sharp_v[x]] = v & \textrm{if\ } v \ne \bot \\ 
\Gamma'[\Gamma^\sharp_c[x]] = \textrm{false} & \textrm{if\ } v = \bot
\end{array} \right.
\end{multline*}

The second part relates the term in the constraints to the values
contained in the map \(M\). This amounts to seeing constraint generation
as a mere pre-processing of the input. It is similar to that of the
non-relational abstract domain of \ref{sec:non-rel-abs-int}, but
taking conditions into account.
\[ \Gamma \in \gamma_2(M) \quad\Leftrightarrow\quad \forall x \in \Gamma, c \Vdash v^\sharp \in M[x]:\  \Gamma[c] = \textrm{true} \ \Rightarrow\ \Gamma[x] \in \gamma_{\mathcal{V}}(v^\sharp) \]

The combination consists in replacing, in the definition of
\(\gamma_1\), the set of possible environments \(\llbracket C^\sharp \rrbracket(\varepsilon)\), by the approximation of
this set \(\gamma_2(M)\).

\begin{theorem} The constraint propagation abstract domain is sound. \end{theorem}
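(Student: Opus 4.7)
The plan is to exploit the explicit decomposition $\gamma = \gamma_1 \circ \gamma_2$ exhibited at the end of Section~\ref{sec:constraint-propag-abs-domain} and prove soundness factor by factor, then check Definition~\ref{def:sound-abs-domain}'s initial-state condition separately. The initial-state condition is immediate: $\varepsilon^\sharp$ corresponds to the empty context, the empty maps $\Gamma^\sharp_c$, $\Gamma^\sharp_v$, $M$, so that $\gamma(\varepsilon^\sharp)$ contains $\varepsilon$ trivially. Thus the only real obligation is the soundness-of-evaluation rule.

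First, I would show that the constraint-generation step of Figure~\ref{fig:constraints-generation}, paired with $\gamma_1$, is a sound translator domain in the spirit of Section~\ref{sec:simple-rewriting-abs-int}. This reduces to an induction on the incremental construction of the context, with the invariant: for every produced triple $\langle C^\sharp,\Gamma^\sharp_c,\Gamma^\sharp_v\rangle$ and every $\Gamma$ in the collecting semantics of the processed prefix, there exists $\Gamma'\in\llbracket C^\sharp\rrbracket(\varepsilon)$ such that $\Gamma'[\Gamma^\sharp_c[x]]=\textrm{true}$ and $\Gamma'[\Gamma^\sharp_v[x]]=\Gamma[x]$ when $\Gamma[x]\neq\bot$, and $\Gamma'[\Gamma^\sharp_c[x]]=\textrm{false}$ otherwise. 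The $op$, $\textrm{nondet}$, $\textrm{unknown}$ cases are direct readings of the collecting semantics of Figure~\ref{fig:direct-collecting-semantics}; the $\textrm{assume}$ case is where the condition $\Gamma_c[x_c]\land\Gamma_c[x_v]\land\Gamma_v[x_c]$ is justified (note the three conjuncts correspond exactly to the three ways the evaluation of $\textrm{assume}(x_c,x_v)$ can fail to be $\bot$); the $\mu$ case appeals to the inductive hypothesis on $C_b$ and to the least-fixpoint defining $\llbracket\mu\cdot\rrbracket$.

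Second, I would prove that the map $M$ obtained by initial evaluation, constraint propagation, and loop handling is sound with respect to $\gamma_2$. The loop invariant on the algorithm is: for every binding $x\mapsto c\Vdash v^\sharp\in M$ and every $\Gamma\in\llbracket C^\sharp\rrbracket(\varepsilon)$, $\Gamma[c]=\textrm{true}$ implies $\Gamma[x]\in\gamma_\mathcal{V}(v^\sharp)$. I would check it is established by the initial evaluation of $x\triangleq op(x_1,\ldots,x_n)$ under condition $\Gamma^\sharp_c[x]$, using soundness of $\dot{op}$ (property~3 of an abstract value); preserved by each backward constraint-propagation step, where the strengthening of the condition to $c\land c'$ when crossing an $\textrm{assume}(c',\cdot)$ is exactly what licenses using the refined abstract value; and preserved by the loop step by a direct port of the proof of the non-relational theorem of Section~\ref{sec:non-rel-abs-int}, viewing condition maps as a function lattice with pointwise $\dot\sqcup$, $\dot\sqsubseteq$, $\dot\nabla$, and noting that removing bindings whose condition mentions a variable scoped inside the loop corresponds to an existential projection, which only enlarges $\gamma_2(M)$.

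Finally, soundness of the composed $\gamma$ follows by substitution: the definition of $\gamma$ replaces $\llbracket C^\sharp\rrbracket(\varepsilon)$ in $\gamma_1$ by the over-approximation $\gamma_2(M)$, and since both factors are sound the composition is sound. The main obstacle is the backward-propagation step: one must carefully relate the condition attached to a refined binding, the conjunctive enrichment performed when crossing an $\textrm{assume}$, and the original definition of $\Gamma^\sharp_c$ in Figure~\ref{fig:constraints-generation}, so that the enriched condition really selects only those concrete environments in which the refined bound holds. Handling this uniformly across an unbounded number of propagation rounds, and together with the loop termination argument (which requires the absence of infinite descending chains in the chosen abstract value, as noted in the text), is where the bulk of the technical work lies; the remaining cases are essentially bookkeeping on top of the soundness theorems already proven for the non-relational and rewriting-based domains.
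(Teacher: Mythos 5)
The paper states this theorem without an explicit proof, but the decomposition you use --- soundness of constraint generation with respect to $\gamma_1$ by induction on the context (mirroring the rewriting-domain argument of Section~\ref{sec:simple-rewriting-abs-int}), soundness of the condition maps with respect to $\gamma_2$ (mirroring the non-relational argument of Section~\ref{sec:non-rel-abs-int}), and composition via replacing $\llbracket C^\sharp\rrbracket(\varepsilon)$ by its superset $\gamma_2(M)$ --- is exactly the structure the paper's own definition of the concretization sets up, and your treatment of the delicate points (the three conjuncts in the \textrm{assume} case, condition strengthening during backward propagation, existential projection of loop-local conditions) is correct. This matches the intended argument; the only cosmetic remark is that termination of the propagation is orthogonal to soundness, since the paper notes that truncating propagation is always sound.
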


\maybe{ - Deux versions de la propagation avant et arrière:
   - Version path-insensitive: pour évaluer en avant, pour une condition, on joine tout ceux qui correspondent. Puis on applique l'opération (avec un point au dessus). De même pour l'arrière.
   - Version path-sensitive: on évalue 2 par 2 toutes les combinaisons qui matchent. }

\maybe{Note: pour interpretation abstraite, on génére le assume au début,
comme ca on travaille avec les domaines réduits aussi vite que
possible.}

\section{An abstract interpreter of embedded C programs}
\label{sec:org03328a1}
\label{sec:complete-example-abs-inter}

This section demonstrates the practical applicability of our approach,
by describing the implementation of a complete analyzer for embedded C
programs (including low-level memory manipulation such as casts and
bitfields, but currently excluding recursion and dynamic memory
allocation). The system is composed as a succession of simple abstract
domain "passes" communicating with one another. 

\begin{figure}[htbp]
\centering
\pgfdeclarelayer{background}
\pgfdeclarelayer{foreground}
\pgfsetlayers{background,foreground}
\scalebox{1.0}{\begin{tikzpicture}[xscale=3,yscale=2]

\coordinate (C) at (-1,0) ;
\begin{pgfonlayer}{foreground}
\node[draw,ellipse,rounded corners, align=center] (c2laf) at (0,0) {\small Static \\ translation};
\node[fill=black!20,draw,rectangle,rounded corners, align=center] (region) at (1,0) {\small Region\\separation};
\node[fill=black!20,draw,rectangle,rounded corners, align=center] (array) at (2,0) {\small Array\\separation};
\end{pgfonlayer}

\begin{scope}[shift={(0,-0.2)}]
\begin{pgfonlayer}{foreground}
\node[fill=black!20,draw,rectangle,rounded corners, align=center] (bitvec) at (2,-1) {\small Bitvector\\ simplification};
\node[fill=black!20,left,draw,rectangle,rounded corners, align=center] (constr) at (0.9,-0.75) {\small Constraint\\ propagation};
\node[fill=black!20,left,rectangle,rounded corners, align=center] (smt) at (0.9,-1.25) {\small Generation of \\proof obligation};
\node[fill=white,draw,ellipse,rounded corners, align=center] (smtsolv) at (-0.6,-1.25) {\small SMT solver};
\node[fill=black!20,rectangle,rounded corners=2pt,draw,minimum size = 1pt,inner sep=1.5pt](middle) at (1.2,-1) {$\times$};
\end{pgfonlayer}
\begin{pgfonlayer}{background}
\draw[rounded corners,fill=black!20] (smt.north east) rectangle (smt.south -| smtsolv.west);
\end{pgfonlayer}
\end{scope}


\begin{pgfonlayer}{foreground}
\draw[->] (C) -- node[above,align=center] {C program} (c2laf);
\draw[->] (c2laf) edge node[below,align=center] {LAF\\ (memory + \\ bitvector)} (region);
\draw[->] (region) edge node[below,align=center] {LAF \\ (array + \\ bitvector)} (array);
\draw[->] (array) edge node[right,align=center] {LAF\\ (bitvector)} (bitvec);
\draw[->] (bitvec) edge node[below=5pt,align=center] {LAF\\ (bitvector)} (middle);
\draw[->] (middle) |- (constr);
\draw[->] (middle) |- (smt);
\draw[<->] (smtsolv.east) +(0.05,0) -- node[above] {\scriptsize SMT} (smt);
\end{pgfonlayer}

\end{tikzpicture}}

\caption{High-level view of the analyzer. Gray rectangles represent abstract domains, and ellipses other processes. $\times$ represents abstract domain product: both domains have the same input term.  }
\label{fig:high-level-view-analyzer}

\end{figure}
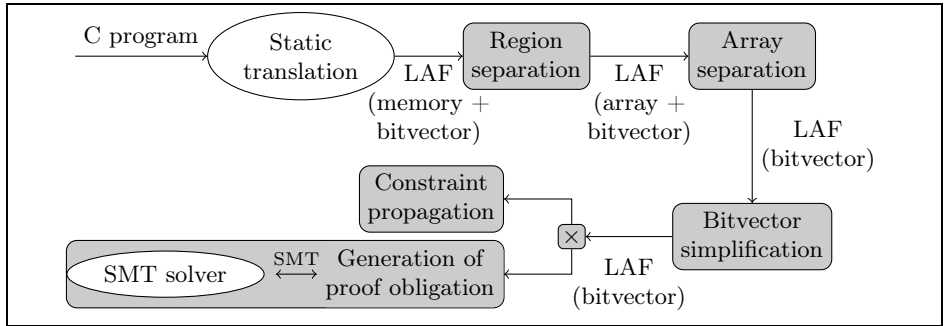

Figure \ref{fig:high-level-view-analyzer} presents a high-level view
of the analyzer. Each rectangular node is an abstract domain, that
inputs a LAF term, and possibly outputs a simplified one. Input and
output terms may use operators of different theories. Especially, the
LAF term obtained from the translation of the C program contains
\texttt{load} and \texttt{store} memory operators; while the LAF term used as input
of the leaves abstract domains do not. This allows 1. to translate
this term to SMT solvers that only understand bitvector theory,
and 2. to implement a numeric constraint propagation domain that is
unaware of memory operations.

The translation from C to LAF is standard (similar to
e.g. \citet{cytron1991efficiently}). Appendix
\ref{sec:translation-from-while} presents the full translation rules
for the simpler \textsc{While} language.

\smallskip
\noindent {\bf Region and cell separation\ } The region separation functor splits the memory into independent,
non-overlapping memory regions, where a region corresponds to the
memory allocated for a C variable (local or global). The cell
separation functor partitions a memory region into contiguous slices
of known size, and was informally described in Section
\ref{sec:hierarchy-translator-domains}. Both are currently quite
naive; but better memory representations functors
(e.g. \cite{chang2013modular,cousot2011parametric}) could be adapted
to become translator domains.

The output of these successive translator domains is a LAF term that
does not contain memory operations, and whose variables correspond to
the values contained in memory. This term is analyzed numerically by
the later domains; the result of this analysis is itself used by the
memory translator domains to know which memory locations are being
read and written.

\maybe{Interet de splitter ces domaines: plus simple, et on peut rajouter la recency abstraction pour l'un, des array summarizations pour l'autre;}
\maybe{}

\smallskip \noindent {\bf Purely syntactic term rewriting\ }
Our prime use of term rewriting is to simplify the bitvector
concatenation and extraction operations. This happens notably when the
program performs byte per byte copies (e.g. using \texttt{memcpy}). It plays
the same role as the memory equality predicate domain of
\citet{mine2006field}, but the implementation is very different, as
equality between values is provided by syntactic equality between LAF
variables.

\smallskip
\noindent {\bf Constraint propagation\ } was already described in section \ref{sec:constraint-propag-abs-domain}

\notnow{In our current implementation, we first convert bitvector operations to arithmetical ones, and perform constraint propagation using an integer theory. For this we currently assume (and check) that bitvector operations do not overflow, which is sometimes limiting; a future version will  (assuming there is no overflow).}

\label{sec:mutual-refiniement-SMT}

\smallskip \noindent {\bf Generation of proof obligations as
first-order and Horn formulas\ } translate LAF terms into a set of
clauses, such that satisfiability testing allows to test if a variable
can have a specific value. Appendix \ref{sec:translation-to-smt}
details the formal translation of LAF term into 1st-order and Horn
clauses; the translation to Horn clauses is exact, while the
translation to first-order clauses is an over-approximation.

\todo{ but the latter yields a simpler, decidable problem (on bitvectors).}

Note that this results in a true combination of abstract domains, and
not just a first static analysis pass followed by generation of proof
obligations. The benefit is that an abstract domain can be combined
with others so that the combination is mutually beneficial. In
particular:

\begin{enumerate}
\item The generation of proof obligation domain benefits from the
simplification of translator domains. In the analyzer, the
generated SMT formula only refers to bitvector operations, and never
to memory operations; this is beneficial since SMT solvers are much
better at handling integers and bitvectors than at handling memory
operations.

\item The domain also indirectly benefits from the numeric constraint
propagation domain, as it is used by the translator domains to
perform simplifications. It could also use the invariants computed
by that domain to increase precision or speed up resolution.

\item In turn, the domain can be used to simplify complex boolean
expressions into the \(\textit{true}\) or \(\textit{false}\) constants,
allowing for a precision increase in the other domains. It can
also be used e.g. to refine the bounds of intervals.
\end{enumerate}

\section{Experimental evaluation}
\label{sec:orge917583}
\label{sec:experimental-evaluation}

This section presents preliminary benchmarks of our analyzer. Please
keep in mind that it is still a prototype; it lacks some of
the features of C (dynamic memory allocation, recursion, function
pointers and \texttt{longjmp}); the handling of function calls (by recursive
inlining) is still naive; no optimization of hot code path was performed.

\begin{table}[htbp]
\footnotesize
\hspace{-10mm}\begin{tabular}{l|r|r|r||r|r||r|r|r||r|r|r||r|r|r}
\multicolumn{4}{r||}{Propagation limit}     & \multicolumn{2}{c||}{0} & \multicolumn{3}{c||}{1} & \multicolumn{3}{c||}{2} & \multicolumn{3}{c}{$\infty$ (backward)} \\ \hline
Bench & \#LOC & \#expr & \#A & time & UP & time & UP & improv. & time & UP & improv. & time & UP & improv.\\
\hline
adpcm        &    610& 499 & 1249 & 1.12 & 39 & 0.94 & 39 & 56\% & 1.04 & 22 & 18\% & 1.06 & 22 & 0.0\%\\
2048         &    435& 617 & 8778 & 38.5 & 314 & 32.9 & 314 & 19\% & 33.19 & 100 & 94\% & 32.56 & 100 & 0.6\%\\
papabench0.4 &   4983& 8286 & 6415 & 84.4 & 412 & 78.2 & 412 & 0.9\% & 82.14 & 83 & 30\% & 80.96 & 83 & 0.0\%\\
transport    &  13294& 10674 & 11064 & 213 & 100 & 182 & 100 & 8.3\% & 61.41 & 14 & 16\% & 63.04 & 14 & 0.1\%\\
robotics \#0 &  13353& 161 & 306 & 1.95 & 18 & 1.74 & 18 & 26\% & 1.79 & 18 & 0.0\% & 1.80 & 18 & 3.7\%\\
robotics \#1 &  13353& 1454 & 1978 & 4.06 & 80 & 3.61 & 80 & 1.9\% & 2.85 & 51 & 14\% & 2.77 & 51 & 0.7\%\\
robotics \#2 &  13353& 22 & 125 & 0.88 & 0 & 0.78 & 0 & 0.0\% & 0.84 & 0 & 0.0\% & 0.85 & 0 & 0.0\%\\
robotics \#3 &  13353& 872 & 860 & 0.97 & 0 & 0.85 & 0 & 1.1\% & 0.94 & 0 & 0.0\% & 0.94 & 0 & 0.0\%\\
robotics \#4 &  13353& 298 & 620 & 3.24 & 106 & 2.76 & 106 & 8.1\% & 1.46 & 62 & 76\% & 1.37 & 21 & 22\%\\
robotics \#5 &  13353& 2372 & 375 & 98.2 & 221 & 85.9 & 221 & 3.2\% & 77.68 & 167 & 27\% & 74.70 & 165 & 0.9\%\\
robotics \#6 &  13353& 280 & 481 & 0.96 & 6 & 0.85 & 6 & 0.0\% & 0.94 & 6 & 0.0\% & 0.92 & 6 & 0.0\%\\
\hline
\end{tabular}
\#LOC = lines of code; \#expr = live expressions; \\ \#A = total alarms to be proved; UP = alarms left unproved; \\ improv. = \% of expressions more precise than the preceding propagation limit.
\vspace{2mm}

\caption{Benefits of constraint propagation.}
\label{tab:constraint-propagation}
\end{table}

\noindent {\bf Analyzing embedded industrial applications\ }
Table \ref{tab:constraint-propagation} demonstrates the benefits of
using constraint propagation on a variety of real benchmarks. The
first columns give the number of expressions of the program that are
not dead, and the number of memory-related alarms that must be proved
(i.e. pointers are valid and indices are in bounds). Then, for
different limits on the constraint propagation, we give the analysis
time (in seconds); the number of alarms that remain unproved; and the
number (and percentage) of expressions for which we compute a more
precise set of possible values, wrt. the preceding propagation
limit. The first three benchmarks are open-source (2048 is a game;
adpcm a filter; papabench0.4 represents code embedded in typical
UAVs). All the other benchmarks are automatic control systems coming
from various industries (the last is made of 7 independent threads).
\looseness=-1

The benchmarks demonstrate that constraint propagation has a real
impact on the precision of the analysis. Going from 0 to 2 variables
being propagated leads to refining most of the expressions (and
alarms) of the programs, while \emph{decreasing} the analysis time. This
can be explained by the fact that less dead code is visited, but also
in our case that memory updates are more precise and concern less
locations. Going for unlimited backward propagation (or backward +
forward, not shown in the table) also keep on being faster and more
precise, but to a lesser degree. Note that the number of alarms is
still high, but can be reduced to almost 0 by standard tricks such as
loop unrolling and user annotations.

\smallskip

\noindent {\bf Generation of Horn clauses\ } Another way to
discharge unproved assertions is to send the remaining ones to a
solver which focuses on a single goal. We made an experiment with
benchmarks of the SVComp 2016 competition \cite{beyer2016reliable},
whose goal is to prove the validity of an assertion in a program which
range from few to tens of KLOCS. Most of these assertions are out of
the reach of our abstract domain using intervals. However, the
structure of out interpreter allows to build a LAF term, which is
equivalent to the original program, but stripped from any memory
operations. This property allows to leverage (after conversion, see
Appendix \ref{sec:translation-to-smt}) the Z3 horn checker \(\mu{}Z\)
\cite{hoder2012generalized}, which does not support array theory, only
bitvectors or integers.

\begin{table}
\begin{center}
\begin{tabular}{l|rrr|rrr|rrr|r}
               
& \multicolumn{3}{c|}{Lines of code} & \multicolumn{3}{c|}{Buggy programs} & \multicolumn{3}{c|}{Correct programs} & \\
SVComp category & min & avg & max &total & proved & unsound & total & proved & unsound & points \\
\hline
Product lines & 838 & 1943 & 3789 & 265 & 92 & 0 & 332 & 262 & 0 & 616 \\
Loops         & 14  & 46   & 1644 & 48 & 18 & 0 & 93 & 42 & 0 & 102 \\
ControlFlow   & 94  & 634  & 2152& 18 & 3 & 0 & 30 & 15 & 0 & 33 \\
\end{tabular}
\end{center}

\caption{Number of programs proved buggy or correct, in some SVComp categories, with a 10s timeout. We count 2 points 
for every correct program proved correct, and 1 for buggy program proved buggy. Our tool never provides a wrong (unsound) answer.}
\end{table}
On the two first categories and with a 10s timeout, the abstract
interpreter combined with \(\mu{}Z\) is already
competitive\footnote{\url{https://sv-comp.sosy-lab.org/2016/results/results-verified/}}
(would rank 3rd/18 on Loops, and 3rd/16 on ControlFlow). This is
despite all the shortcomings of our current implementation: some C
features are not supported (e.g. variable length arrays), and we do
not yet export the invariants we found to help in the Horn clauses.

\section{Related work}
\label{sec:orge0b10d9}

The closest relatives to our term-based abstract interpretation
framework are existing symbolic abstract domains in the traditional
lattice-based framework. 

The symbolic constant domain of \citet{mine2007symbolic} maps program
variables to expressions on other program variables, and provides a
solution to the loss of precision induced by storing intermediate
computations in temporary variables. \citet{logozzo2008relative} and
\citet{djoudi2016recovering} implement similar domains, and insists on
the need of these methods for low-level languages, in which every
computation makes use of temporary variables.

\citet{chang2005abstract} introduces a symbolic abstract domain where
variables represent values, instead of memory locations. This avoids
loosing precision when program variables are overwritten. Numeric
abstract domains are used to compute relations between these symbolic
variables. \citet{chang2013modular} shows the importance of having
variables representing values when designing precise memory
abstractions. In term-based abstract interpretation, base abstract
domains compute relations between LAF variables (which represent
values); these variables are referred to by our memory
abstraction. The main difference is that all the LAF variables are
linked together in a term which is an abstraction of the whole
program, i.e. we never lose any symbolic information, even on loops
and control-flow joins.

\citet{gange2016uninterpreted} combines their symbolic abstract domain
with constraint propagation over non-relational domains. As LAF terms
represent loops and conditionals precisely, our constraint
propagation abstract domain extends this work with the ability to
propagate the constraints across the whole program.

All these domains belong to the traditional lattice-based framework;
thus none of them leverage the fact that variables represent values by
sharing all the information about them in a single store.

\section{Conclusion}
\label{sec:org067848f}

We have presented a term-based abstract interpretation framework,
whose main ingredients are: a logic, that can be used as the abstract
state in an abstract domain, and can represent the relations between
values in the program without loss of precision; and the definition of
abstract domains as abstract interpreters over this logic, allowing
the definition of abstract domains as a combination of translations.
We have demonstrated the applicability of the framework by describing
several abstract domains combining numeric and symbolic reasoning; and
we used these domains to build a complete analyzer for C programs. We
now plan on applying the technique on languages where symbolic
reasoning is very important, such as Static Single Assignment or
binary analysis.

\bibliography{biblio}

\appendix

\section{Small-step operational semantics of LAF}
\label{sec:org6b2b754}
\label{sec:laf-small-step-operational-semantics}

A small-step operational semantics also exists for LAF terms:

\begin{align*}
\Sigma::\langle \Gamma, \textrm{let\ } x = op_n(x_1,\ldots,x_n) \textrm{\ in\ } t \rangle     \to & \\ & \hspace{-5cm} \Sigma::\langle \Gamma[x \mapsto op_n(\Gamma[x_1],\ldots,\Gamma[x_n])], t \rangle\\
\\
\Sigma::\langle \Gamma, \textrm{let\ } x = \textrm{nondet}(x_1,x_2) \textrm{\ in\ } t \rangle \to & \\ & \hspace{-5cm} \Sigma::\langle \Gamma[x \mapsto \Gamma[x_1]], t \rangle\\
\Sigma::\langle \Gamma, \textrm{let\ } x = \textrm{nondet}(x_1,x_2) \textrm{\ in\ } t \rangle \to & \\ & \hspace{-5cm} \Sigma::\langle \Gamma[x \mapsto \Gamma[x_2]], t \rangle\\
\\
\Sigma::\langle \Gamma, \textrm{let\ } x = \textrm{assume}(x_c,x_v) \textrm{\ in\ } t \rangle \to & \\ & \hspace{-5cm} \Sigma::\langle \Gamma[x \mapsto \Gamma[x_v]], t \rangle \mathrm{\ when\ } \Gamma[x_c] = true \\
\Sigma::\langle \Gamma, \textrm{let\ } x = \textrm{assume}(x_c,x_v) \textrm{\ in\ } t \rangle \to & \\ & \hspace{-5cm} \Sigma::\langle \Gamma[x \mapsto \bot], t \rangle \mathrm{\ when\ } \Gamma[x_c] = false \\
\\
\Sigma::\langle \Gamma, \textrm{let\ } x = \muterm{x_s}{t_b}{x_e}{x_i} \textrm{\ in\ } t \rangle     \to & \tag{do not enter loop} \\
& \hspace{-5cm} \Sigma::\langle \Gamma[x \mapsto \Gamma[x_i]], t \rangle  \\
\Sigma::\langle \Gamma, \textrm{let\ } x = \muterm{x_s}{t_b}{x_e}{x_i} \textrm{\ in\ } t \rangle     \to & \tag{enter loop} \\
& \hspace{-5cm} \Sigma::\langle \Gamma, \textrm{let\ } x = \muterm{x_s}{t_b}{x_e}{x_i} \textrm{\ in\ } t \rangle:: \langle \Gamma[x_s \mapsto \Gamma[x_i]], t_b[x_e] \rangle  \\
\Sigma::\langle \Gamma, \textrm{let\ } x = \muterm{x_s}{t_b}{x_e}{x_i} \textrm{\ in\ } t \rangle:: \langle \Gamma', x_e \rangle  \to & \tag{loop exit} \\
& \hspace{-5cm} \Sigma::\langle \Gamma[x \mapsto \Gamma'[x_e]],t \rangle  \\
\Sigma::\langle \Gamma, \textrm{let\ } x = \muterm{x_s}{t_b}{x_e}{x_i} \textrm{\ in\ } t \rangle:: \langle \Gamma', x_e \rangle  \to & \tag{loop again} \\ 
& \hspace{-5cm} \Sigma::\langle \Gamma, \textrm{let\ } x = \muterm{x_s}{t_b}{x_e}{x_i} \textrm{\ in\ } t \rangle::\langle \Gamma[x_s \mapsto \Gamma'[x_e]],t_b[x_e] \rangle  \\
\end{align*}

The semantics is nondeterministic because of the constructors \emph{nondet}
  and \(\mu\). The semantics uses a stack, whose depth represents the
  level of loop nesting in which we are. The stack is used to save the
  context when entering a loop.  Elements of the stack are pairs of an
  environment and a term, respectively representing the values already
  computed, and the term that remains to be computed, for each loop
  nesting level. Most step just update the head of the stack, by
  updating the environment and the term.

Execution "blocks" when an \texttt{assume} expression is encountered, with
a value of \emph{false} for its first argument. This is represented by
adding to all sorts a special \(\bot\) symbol. Note that \texttt{assume}
delimits the part of the term which is blocked; execution of other
subterms continue without any problem.

\begin{theorem}[Alternative definition of the collecting semantics] We note by $\to^{*}$ the transitive closure of $\to$. Then

\[ \llbracket C \rrbracket(\varepsilon) = \{\ \Gamma: \langle \varepsilon, C \rangle\to^{*}\langle \Gamma, [] \rangle\ \} \]
\end{theorem}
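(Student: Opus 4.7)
The statement compares two characterizations of the possible outcomes of evaluating a closed context from the empty environment. To prove equality, I would strengthen the claim to hold for an arbitrary starting environment, and then proceed by structural induction on the context $C$. Concretely, the strengthened target is that for every context $C$ and every environment $\Gamma$ that binds all free variables of $C$,
\[
\llbracket C \rrbracket(\Gamma) \;=\; \{\, \Gamma' \mid \langle \Gamma, C\rangle \to^{*} \langle \Gamma', []\rangle \,\}.
\]
The base case $C = []$ is immediate: the denotational semantics returns $\{\Gamma\}$, and no small-step rule applies to $\langle \Gamma, []\rangle$, so the reflexive closure gives exactly the singleton $\{\Gamma\}$.

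For the inductive step, each syntactic form of $C$ is matched with exactly one small-step rule (or a small family thereof), and in each case the one-step reduction reaches a configuration $\langle \Gamma_1, C'\rangle$ where $C'$ is a strictly smaller context, letting me apply the induction hypothesis to conclude. For the deterministic $op_n$ case, the update on $\Gamma$ in the small-step rule is identical to the environment update in the denotational clause. For $\textrm{nondet}$, the two small-step transitions enumerate the two subsets whose union is the denotational result; the $\textrm{assume}$ case splits similarly on $\Gamma[x_c]$, with $\bot$ used when the condition is false. The $\textrm{unknown}_\mathcal{S}$ case requires adding the (missing but obvious) rule that lets $x$ step to any $v \in \mathcal{S}$; then the infinitary union is captured exactly by the infinitary nondeterminism of the small-step relation.

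The real obstacle is the $\mu$ case, because the denotational semantics is defined via a least fixpoint $\textrm{lfp}(\texttt{fix})$ while the small-step semantics implements iteration through a stack. My plan is to prove a two-way correspondence by an auxiliary induction on iteration count. Define $S_0 = \{\Gamma[x_0]\}$ and $S_{k+1} = \bigcup_{v \in S_k} \{\Gamma_b[x'] \mid \Gamma_b \in \llbracket C_b\rrbracket(\Gamma[x \mapsto v])\} \cup S_0$; then $\textrm{lfp}(\texttt{fix}) = \bigcup_k S_k$. For the operational side, I would show by induction on $k$ that a value $v$ lies in $S_k$ iff there is a trace that pushes the loop frame, performs at most $k$ cycles of ``loop again'' (each invoking the inductive hypothesis on the strictly smaller body $C_b$), and finally takes the ``loop exit'' rule into a head configuration whose binding of $x$ equals $v$. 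Composing with the induction hypothesis on the continuation $C$ then finishes the case.

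The hardest technical point will be aligning the stack discipline of the operational semantics with the denotational ``two-stage'' presentation (iterate, then choose a value): I need to be careful that the environment saved on entry is restored correctly on exit (only the binding of $x$ changes in the outer frame), and that an iteration through $C_b$ really does correspond to one application of \texttt{fix}, rather than an interleaving with the outer context $C$. Once this bookkeeping is done, each direction of the set equality follows: soundness of operational traces w.r.t.\ the fixpoint (every terminating trace's final environment lies in some $S_k$, hence in $\textrm{lfp}(\texttt{fix})$), and completeness (every element of $S_k$ is witnessed by a trace of the right shape, obtained by concatenating $k$ body-executions produced by the inner IH).
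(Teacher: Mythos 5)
The paper states this theorem without proof, so there is nothing to compare against; judged on its own, your plan is sound and correctly structured: strengthening to an arbitrary starting environment, structural induction on $C$, and a nested induction on the iteration count for $\mu$ (using that \texttt{fix} is a complete union-morphism, so $\mathrm{lfp}(\texttt{fix})=\bigcup_k S_k$) is exactly the right decomposition, and you rightly spot that the appendix omits the small-step rule for $\textrm{unknown}_{\mathcal{S}}$. Two points still need to be made explicit before this is a proof. First, your induction hypothesis speaks of single-frame configurations reaching $\langle \Gamma', []\rangle$, but inside a $\mu$ the body runs as the top frame of a deeper stack and terminates at $\langle \Gamma', x_e\rangle$ (a variable, not the empty context); you therefore need (i) a version of the statement for terms $C[x]$ ending in $\langle \Gamma', x\rangle$, and (ii) a frame-locality lemma saying that the top frame reduces independently of the frames below it, so that a body execution under the loop frame is literally a body execution in isolation. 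You name this as the hard bookkeeping, but it is a lemma that must be stated and proved (by a routine induction on the reduction sequence), not just an alignment to be checked. Second, the environment-restoration issue you flag is in fact benign and worth verifying once: both \emph{loop again} and the denotational clause restart the body from the \emph{outer} environment with only $x_s$ (resp.\ $x$) rebound, discarding the bindings produced by the previous iteration, so the two sides agree by construction. With those two items filled in, each direction of the set equality follows as you describe.
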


\section{Lifting traditional relational abstract domains}
\label{sec:org094f36f}
\label{sec:lifting-traditional-relational-abstract-domains}

When generating constraints (Figure \ref{fig:constraints-generation}),
for each input variable \(x\) we extracted a condition
\(\Gamma_c^\sharp[x]\), corresponding to the necessary condition for \(x\)
to evaluate to a value different from \(\bot\). The idea here is
similar: we associate to each input variable, an element \(D\) of a
traditional abstract domain; this element corresponds to environments
that match the condition \(\Gamma_c^\sharp[x]\), and describes the
relations between \(x\) and all the variables on which it depends
transitively (where \(a\) depends on \(b\) means that \(b\) is an argument
of the operator used to define \(a\)).

%
%
%
%
%
%
%
%
%
%
\begin{figure}[htbp]

\[
\begin{array}{l@{\hspace{2mm}}l}
\varepsilon^\sharp = & \varepsilon \\ \\
\llbracket \textrm{let\ } x = op(x_1,\ldots,x_n) \textrm{\ in\ } C \rrbracket^\sharp(\Gamma^\sharp) = 
& \textrm{let\ } D = \bigsqcap_{1 \le i \le n}\Gamma^\sharp[x_i] \textrm{\ in} \\
& \textrm{let\ } D' = \{| x \gets op(x_1,\ldots,x_n) |\}(D) \textrm{\ in} \\
& \llbracket C \rrbracket^\sharp(\Gamma^\sharp[x \mapsto D']) \\ \\ 
\llbracket \textrm{let\ } x = \textrm{nondet}(x_1,x_2) \textrm{\ in\ } C \rrbracket^\sharp(\Gamma^\sharp) = 
& \textrm{let\ } D_1 = \{| x \gets x_1 |\}(\Gamma^\sharp[x_1]) \textrm{\ in}\\
& \textrm{let\ } D_2 = \{| x \gets x_2 |\}(\Gamma^\sharp[x_2]) \textrm{\ in}\\
& \textrm{let\ } D' = D' = D_1 \sqcup D_2 \textrm{\ in} \\
& \llbracket C \rrbracket^\sharp(\Gamma^\sharp[x \mapsto D']) \\ \\
\llbracket \textrm{let\ } x = \textrm{unknown}() \textrm{\ in\ } C \rrbracket^\sharp(\Gamma^\sharp) = & \llbracket C \rrbracket^\sharp(\Gamma^\sharp[x \mapsto \top]) \\ \\

\llbracket \textrm{let\ } x = \textrm{assume}(x_c,x_v) \textrm{\ in\ } C \rrbracket^\sharp(\Gamma^\sharp) = 
& \textrm{let\ }D = \Gamma^\sharp[x_c] \sqcap \Gamma^\sharp[x_v] \textrm{\ in}\\
& \textrm{let\ } D' = \{| \textrm{assume\ } x_c |\}(D) \textrm{\ in}\\
& \textrm{let\ } D'' = \{| x \gets x_v |\}(D') \textrm{\ in}\\
& \llbracket C \rrbracket^\sharp(\Gamma^\sharp[x \mapsto D'']) \\ \\
 \texttt{killall}([],D) = & D \\
 \texttt{killall}(\textrm{let\ } x = ... \textrm{\ in\ } C,D) = & \texttt{killall}(C,\{| x \gets \textrm{unknown}() |\}(D)) \\ \\
\llbracket \textrm{let\ } x = \muterm{x_s}{C_b}{x_e}{x_i} \textrm{\ in\ } C \rrbracket^\sharp(\Gamma^\sharp) =
& \textrm{let\ } D_i = \{| x \gets x_i |\}(\Gamma^\sharp[x_i]) \textrm{\ in} \\
& \textrm{let\ rec\ \tt fix}(D) = \\
& \quad \textrm{let\ } D_s = \{ x_s \gets x \}(D) \textrm{\ in} \\
& \quad \textrm{let\ } \Gamma^\sharp_s = \Gamma^\sharp[x_s \mapsto D_s] \textrm{\ in} \\
& \quad \textrm{let\ } \Gamma^\sharp_b = \llbracket C_b \rrbracket^\sharp(\Gamma^\sharp_s) \textrm{\ in} \\
& \quad \textrm{let\ } D_e = \{| x \gets x_e |\}(\Gamma^\sharp_b[x_e]) \textrm{\ in} \\
& \quad \textrm{let\ } D_s' = D_i \sqcup \texttt{killall}(D_e) \textrm{\ in} \\
& \quad \textrm{if\ } (D_s' \sqsubseteq D_s) \textrm{\ then\ } D_s' \textrm{\ else\ } \texttt{fix}(D_s \nabla D_s') \\
& \textrm{in\ } \\
& \textrm{let\ } D' = \texttt{fix}(D_i) \textrm{\ in} \\
& \llbracket C \rrbracket^\sharp(\Gamma^\sharp[x \mapsto D'])
\end{array}
\]

\caption{Algorithm: Evaluating LAF terms with usual abstract domains. \mbox{$\{| x \gets expr |\}$} denotes the transfer function for the assignment $x \gets e$. }
\label{fig:conversion-from-usual-domains}

\end{figure}

We note by \(\gamma_t\) the concretization of the traditional
domain. The concretization is defined as follows: for every binding \(x
\mapsto v\) of possible environments \(\Gamma\), every abstract domain
element in \(\Gamma^\sharp\) must agree that this binding is indeed
possible.

\[ \Gamma \in \gamma(\Gamma^\sharp) \quad\Leftrightarrow\quad \forall (x \mapsto v) \in \Gamma: \forall (y \mapsto D) \in \Gamma^\sharp: \exists \Gamma' \in \gamma_t(D): \Gamma'[x] = v \]

\begin{theorem} If the original abstract domain is sound with regards to its concretization $\gamma_t$, then its lifting is also sound. 
\end{theorem}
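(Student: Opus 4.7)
The plan is to verify the two clauses of Definition \ref{def:sound-abs-domain} for the lifted domain. Initial state soundness is immediate: $\varepsilon^\sharp = \varepsilon$, and the concretization $\gamma$ trivially accepts the empty environment because the outer universal quantifier over bindings of $\Gamma$ is vacuous. For soundness of abstract evaluation, I would proceed by structural induction on the context $C$, with the base case $C = []$ following from $\llbracket [] \rrbracket(\Gamma) = \{\Gamma\}$ and $\llbracket [] \rrbracket^\sharp(\Gamma^\sharp) = \Gamma^\sharp$.

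The inductive step reduces, for each let-binding form, to a local soundness lemma of the shape: if $\Gamma \in \gamma(\Gamma^\sharp)$, then for every concrete extension $\Gamma[x \mapsto v]$ produced by the semantics of that binding, we have $\Gamma[x \mapsto v] \in \gamma(\Gamma^\sharp[x \mapsto D'])$, where $D'$ is the abstract value computed by the algorithm. The tail context is then handled by the induction hypothesis. For the $op$ case, I use the soundness of the traditional domain's meet (to combine the constraints attached to the arguments into a single $D$ that concretizes to a set of joint environments containing $\Gamma$ restricted to the $x_i$) followed by the soundness of the traditional assignment transfer function. The $\textrm{nondet}$ case uses the soundness of $\sqcup$ together with the assignment transfer function on each branch; $\textrm{unknown}$ is handled by the soundness of $\top$; $\textrm{assume}$ combines meet, the assume transfer function, and assignment in the obvious way.

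The $\mu$ case is the main obstacle. Here I would follow the classical pattern used for lifting Kleene iteration with widening. One shows by an inner induction on the number of iterations of \texttt{fix} that every abstract value $D$ that appears as the parameter of \texttt{fix} is a sound over-approximation of the set of environments reachable after an arbitrary number of unrollings of $C_b$ starting from the initial value, i.e.\ of the subsets $S$ that appear in the collecting fixpoint. Termination of \texttt{fix} via widening guarantees that the returned $D$ is a post-fixpoint of the body operator; by Tarski-style reasoning any post-fixpoint over-approximates the least fixpoint used in the collecting semantics. The subtle point is \texttt{killall}: because the variables bound inside $C_b$ have no meaning in the next iteration or outside the loop, they must be existentially removed from $D_e$ before joining with $D_i$; the soundness of \texttt{killall} is obtained by iterating the soundness of the assignment-to-unknown transfer function (itself justified by the soundness of $\top$), which matches the semantic erasure of these bindings when we restrict environments to the variables of the enclosing context.

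Finally, I would observe that all the reasoning above uses the underlying domain only through its documented sound primitives ($\sqcap$, $\sqcup$, $\nabla$, $\sqsubseteq$, $\top$, assignment and assume transfer functions); since the hypothesis states that these are sound with respect to $\gamma_t$, the local lemmas go through and the structural induction closes. The resulting inclusion $\bigcup_{\Gamma \in \gamma(\Gamma^\sharp)} \llbracket C \rrbracket(\Gamma) \subseteq \gamma(\llbracket C \rrbracket^\sharp(\Gamma^\sharp))$ is precisely clause 2 of Definition \ref{def:sound-abs-domain}, completing the proof.
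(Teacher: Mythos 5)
First, a point of reference: the paper states this theorem with no proof at all, so there is no official argument to compare yours against. Your overall structure is the natural one --- verify the two clauses of Definition \ref{def:sound-abs-domain}, do a structural induction on the context with one local soundness lemma per let-form, and handle $\mu$ by an inner induction on the iterates of \texttt{fix} plus the post-fixpoint argument --- and your reading of \texttt{killall} as iterated assignment-to-unknown mirroring the semantic erasure of loop-local bindings is correct.

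There is, however, a genuine gap in the $op$ case (which then propagates to \emph{assume}, \emph{nondet} and $\mu$). The paper's concretization puts the existential \emph{inside} both universal quantifiers: $\Gamma \in \gamma(\Gamma^\sharp)$ only guarantees that for each binding $x \mapsto v$ of $\Gamma$ and each stored element $D$ there is \emph{some} witness $\Gamma' \in \gamma_t(D)$ with $\Gamma'[x] = v$, possibly a different witness for every binding. Your step asserting that the meet of the arguments' elements ``concretizes to a set of joint environments containing $\Gamma$ restricted to the $x_i$'' silently upgrades this to a single witness agreeing with $\Gamma$ on all bindings simultaneously. Under the literal induction hypothesis this upgrade fails: the witnesses for two different bindings inside $\gamma_t(\Gamma^\sharp[x_1])$ and $\gamma_t(\Gamma^\sharp[x_2])$ may be pairwise inconsistent, $\gamma_t(D_1) \cap \gamma_t(D_2)$ may then be empty, and a sound meet is allowed to return an element whose concretization validates none of the old bindings --- at which point $\Gamma[x \mapsto v] \notin \gamma(\Gamma^\sharp[x \mapsto D'])$ and clause 2 breaks. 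The repair is to carry a strengthened invariant through the induction (for every $D$ in the range of $\Gamma^\sharp$ there is \emph{one} $\Gamma' \in \gamma_t(D)$ agreeing with $\Gamma$ on every defined variable, and each freshly introduced variable is unconstrained in the previously stored elements); this implies the stated $\gamma$, and with it your meet/assignment/join steps do go through. You should also address $\bot$ explicitly: the collecting semantics of \emph{assume} binds $x$ to $\bot$ on a false condition and $\bot$ propagates through theory operations, so the local lemma for \emph{assume} needs the lifted concretization (or the traditional domain's environments) to account for that value --- ``in the obvious way'' does not cover it.
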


Because this lifting relates all the variables of a LAF term; LAF
terms can contain a large number of variables; most operations on
numerical abstract domains have a complexity supra-linear in the
number of variables, a naive application of this technique would
probably be very slow. However, this can be mitigated by exploiting
the fact that most variables would be unrelated
\cite{gange2016exploiting}, or limiting relations by \emph{packing}
variables together \cite{blanchet2003static,venet2004precise}.

\section{Translation to SMT and Horn}
\label{sec:org10a762b}
\label{sec:translation-to-smt}

\newcommand{\smtmap}{\mathcal{M}}

\subsection{Translation to SMT}
\label{sec:orgcad4f08}

We begin by this translation, as it is easier

Intuitively, the translation associates to each LAF variable \(x\) a
pair of SMT variables \(\langle c_x,v_x \rangle\), where
\begin{itemize}
\item \(c\) represents the necessary condition for \(x\) to be different from \(\bot\)
\item \(v\) represents the value to which \(x\) evaluates.
\end{itemize}

Thus if \(x\) has value \(\bot\), then \(c_x\) has value \(false\) and \(v_x\)
can be anything; if \(x\) has value \(33\), then \(c_x\) is \(true\) and \(v_x\)
is also 33.

More formally, the translation \(\llparenthesis \cdot \rrparenthesis\)
creates a first-order formula \(\varphi\) and a mapping \(\smtmap\) from
LAF variables \(x\) to SMT variables \(\langle c_x,v_x\rangle\); such that
if \(x \Downarrow u\), then the formula \(\varphi \land c_x \land v_x =
u\) is satisfiable. The converse is not true, because the translation
of loops is over-approximated, as done in weakest-precondition
computation. This is "fixed" by generating Horn clauses instead of
SMT, which extends this translation to handle loops.

\begin{mathpar}
\inferrule[Empty and Result]{}{\llparenthesis [] \rrparenthesis(\smtmap,\varphi) = \langle \smtmap, \varphi \rangle}

\inferrule[Theory op]{\smtmap[x_1] = \langle c_1,v_1 \rangle \\ \ldots \\ \smtmap[x_n] = \langle c_n,v_n \rangle \\ c \textrm{\ fresh} \\ v \textrm{\ fresh}}
{\llparenthesis \mathrm{let\ } x = op_n(x_1,\ldots,x_n) \mathrm{\ in\ } C \rrparenthesis(\smtmap,\varphi) =\\ \llparenthesis C \rrparenthesis( \smtmap[x \mapsto \langle c, v\rangle],\  \varphi \land (c = c_1 \land \ldots \land c_n) \land (v = op_n(v_1,\ldots,v_n)))}\\


\inferrule[Unknown]{v \textrm{\ fresh}}
{\llparenthesis \mathrm{let\ } x = \textrm{unknown}_\mathcal{S}() \mathrm{\ in\ } C \rrparenthesis(\smtmap,\varphi) = \llparenthesis C \rrparenthesis(\smtmap[x \mapsto \langle true, v\rangle],\varphi)}\\

\inferrule[Assume]{\smtmap[x_1] = \langle c_1,v_1 \rangle \\ \smtmap[x_2] = \langle c_2,v_2 \rangle \\ c \textrm{\ fresh} \\ v \textrm{\ fresh}}
{\llparenthesis \mathrm{let\ } x = \textrm{assume}(x_1,x_2) \mathrm{\ in\ } C \rrparenthesis(\smtmap,\varphi)\ =\\ \llparenthesis C \rrparenthesis(\smtmap[x \mapsto \langle c, v\rangle], \varphi \land (c = c_1 \land c_2 \land v_1) \land (v = v_2))}\\

\inferrule[Nondet]
{\smtmap[x_1] = \langle c_1,v_1 \rangle \\ \smtmap[x_2] = \langle c_2,v_2 \rangle \\ c \textrm{\ fresh} \\ v \textrm{\ fresh}}
{\llparenthesis \mathrm{let\ } x = \textrm{nondet}(x_1,x_2) \mathrm{\ in\ } C \rrparenthesis(\smtmap,\varphi) = \\ \llparenthesis C \rrparenthesis(\smtmap[x \mapsto \langle c, v\rangle],\ \varphi \land \big( c \Rightarrow ((c_1 \land v = v_1) \lor (c_2 \land v = v_2)) \big) \land (c = c_1 \lor c_2))}\\

\inferrule[Mu]{\smtmap[x_0] = \langle c_0,v_0 \rangle \\ v \textrm{\ fresh}}
{\llparenthesis \mathrm{let\ } x = \muterm{x_s}{t_b}{x_e}{x_i} \textrm{\ in\ } C \rrparenthesis(\smtmap,\varphi) = \llparenthesis C \rrparenthesis(\smtmap[x \mapsto \langle c_0, v\rangle],\varphi)}\\

\end{mathpar}

\begin{theorem} Let $\langle \varphi, \smtmap \rangle = \llparenthesis C \rrparenthesis$. Let $\Gamma \in \llbracket C \rrbracket$. Let $x$ be a variable of $C$. Then, $x$ is in $\mathcal{M}$
and we choose $\langle c_x, v_x\rangle = \smtmap[x]$. Then we have:

\[ 
\left\{ \begin{array}{ll} \exists \Gamma \in \llbracket C \rrbracket(\varepsilon): \Gamma[x] = \bot &\quad\Rightarrow\quad \varphi \land \lnot c_x \textrm{\ is\ satisfiable}\\
                          \exists \Gamma \in \llbracket C \rrbracket(\varepsilon): \Gamma[x] = u \ne \bot &\quad\Rightarrow\quad \varphi \land c_x \land (v_x = u) \textrm{\ is\ satisfiable}\\
        \end{array} \right.
\]
\end{theorem}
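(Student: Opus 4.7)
The plan is to prove the statement by structural induction on the context $C$. The theorem as stated is not directly inductive, because after processing a prefix of $C$ we have accumulated a partial formula $\varphi$, a partial map $\smtmap$, and, on the semantic side, an intermediate environment $\Gamma_0 \in \llbracket C_{\text{prefix}} \rrbracket(\varepsilon)$. So the first step is to strengthen the statement into an \emph{invariant} that can be carried through the induction.

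Concretely, I would introduce a compatibility relation between a first-order model $\mathcal{I}$, an SMT map $\smtmap$, and an environment $\Gamma$: we say $\mathcal{I} \models \smtmap \sim \Gamma$ iff for every $x \in \mathrm{dom}(\Gamma)$ with $\smtmap[x] = \langle c_x, v_x\rangle$, we have $\mathcal{I}(c_x) = \mathrm{true} \iff \Gamma[x] \ne \bot$, and moreover $\mathcal{I}(v_x) = \Gamma[x]$ whenever $\Gamma[x] \ne \bot$. The strengthened lemma then reads: if $\llparenthesis C \rrparenthesis(\smtmap_0, \varphi_0) = \langle \smtmap, \varphi\rangle$, and if $\mathcal{I}_0$ is a model of $\varphi_0$ with $\mathcal{I}_0 \models \smtmap_0 \sim \Gamma_0$, then for every $\Gamma \in \llbracket C \rrbracket(\Gamma_0)$ there exists an extension $\mathcal{I}$ of $\mathcal{I}_0$ to the fresh variables introduced by the translation such that $\mathcal{I} \models \varphi$ and $\mathcal{I} \models \smtmap \sim \Gamma$. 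The theorem is then the special case $\Gamma_0 = \varepsilon$, $\varphi_0 = \mathrm{true}$, where the model on the residual variables gives the desired satisfying assignment (adding the conjunct $v_x = u$ or $\lnot c_x$ for the distinguished $x$).

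The induction itself proceeds by case analysis on the head of $C$. The base case $C = []$ is immediate from the definitions. For a theory op let-binding I extend $\mathcal{I}_0$ by setting $c = c_1 \land \ldots \land c_n$ and $v = op_n(v_1,\ldots,v_n)$; compatibility is preserved because theory operations on $\bot$ yield $\bot$, so the conjunction of conditions matches exactly. For \textsc{Unknown}, I pick the witness $v$ from $\mathcal{S}$ that corresponds to the concrete value chosen by the semantics. For \textsc{Assume}, the two sub-cases of the semantics ($\Gamma_0[x_c] = \mathrm{true}/\mathrm{false}$) are handled by noting that $c = c_1 \land c_2 \land v_1$ precisely mirrors the ``dies when condition is false'' semantics; when $\Gamma[x] = \bot$ we only need $v_x$ to be left arbitrary, which the encoding permits. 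For \textsc{Nondet}, the semantics takes a union over two alternatives, and the encoding disjunction $c \Rightarrow ((c_1 \land v = v_1) \lor (c_2 \land v = v_2))$ lets me pick whichever disjunct matches the branch the semantics chose; compatibility follows directly. In each case I then invoke the induction hypothesis on the continuation $C'$ with the extended model.

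The main obstacle will be the \textsc{Mu} case, but it is easier than it looks: the rule translates the $\mu$-binding by introducing a \emph{fresh, completely unconstrained} value variable $v$ and reusing the condition $c_0$ of the initial value. So all I need is to set $\mathcal{I}(v)$ to whatever value $\Gamma[x]$ takes in the lfp-based semantics; since $v$ does not appear in $\varphi$, no clause is violated. The condition part works because, according to the collecting semantics, the value of $x$ is reachable only when the initial value $x_i$ was defined, so $\Gamma[x] \ne \bot \iff \Gamma[x_i] \ne \bot$, which by the induction hypothesis is $\mathcal{I}_0(c_0) = \mathrm{true}$. This is precisely why the theorem is a one-sided implication rather than an equivalence: the SMT encoding over-approximates the loop by allowing any value for $v$, and the lemma only asks that every concrete semantic outcome be realized by some model, which is free. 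The converse direction (soundness of models as semantic environments) would fail here and is why the paper introduces the separate Horn-clause translation.
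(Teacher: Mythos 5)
Your overall strategy --- strengthening the statement to a model/environment compatibility invariant and then extending a model of $\varphi$ construct by construct --- is exactly the strategy the paper sketches (its proof is two sentences saying the result follows by induction, extending a model at each step), so the approach itself matches. However, your elaboration contains a step that fails: you state the invariant as a biconditional, $\mathcal{I}(c_x)=\mathrm{true} \iff \Gamma[x]\ne\bot$, and this biconditional is \emph{not} preserved by the \textsc{Nondet} case. Take the standard if-then-else encoding: $t'=\mathrm{assume}(a,t)$, $u'=\mathrm{assume}(\lnot a,u)$, $x=\mathrm{nondet}(t',u')$ with $a$ an unknown boolean. The collecting semantics always includes the environment in which the dead branch is selected, so some $\Gamma$ has $\Gamma[x]=\bot$; yet the translation forces $c_x = c_{t'}\lor c_{u'}$, which simplifies to $v_a \lor \lnot v_a$ and is true in every model. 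So your claim that ``compatibility follows directly'' breaks in the sub-case where the semantics picks a branch with $\Gamma[x_1]=\bot$ while $c_2$ holds --- and this is in fact a counterexample to the first clause of the theorem as stated, since $\varphi\land\lnot c_x$ is then unsatisfiable. The same defect recurs in your \textsc{Mu} case: the biconditional $\Gamma[x]\ne\bot \iff \Gamma[x_i]\ne\bot$ that you invoke is false in general, because an $\mathrm{assume}$ inside the loop body can kill the exit value after one iteration (and an $\mathrm{unknown}$ in the body can produce a non-$\bot$ value from a $\bot$ seed), while the translation simply reuses $c_0$.

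What does survive is the second clause of the theorem. If you weaken your invariant to the one-directional form $\Gamma[x]\ne\bot \Rightarrow \bigl(\mathcal{I}(c_x)=\mathrm{true} \land \mathcal{I}(v_x)=\Gamma[x]\bigr)$ --- which matches the paper's informal reading of $c_x$ as a \emph{necessary} condition for $x\ne\bot$ --- then every case you wrote goes through: for \textsc{Nondet} you pick the disjunct corresponding to the branch the semantics chose, which under the hypothesis $\Gamma[x]\ne\bot$ is guaranteed non-$\bot$, and for \textsc{Mu} the fresh unconstrained $v$ absorbs any value in the fixpoint set. That yields satisfiability of $\varphi\land c_x\land(v_x=u)$ whenever some $\Gamma[x]=u\ne\bot$, which is the clause actually used by the corollary. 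The paper's own proof sketch never confronts the $\bot$ direction either, so to make your argument sound you should either drop the first clause or repair the translation of $\mathrm{nondet}$ and $\mu$ before claiming it.
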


\begin{proof}

The proof is by induction: at each step of the algorithm, the
$\varphi,\smtmap$ produced verify the property for the variables
already translated (which are in $\smtmap$). For each construct, we
show how a model of the formula $\mathscr{M}$ for $\varphi$ can be
extended to also satisfy the new constraints.

\end{proof}

\begin{corollary}
If $\varphi \land c_x \land (v_x = u) \textrm{\ is\ unsatisfiable}$, then $ \forall \Gamma \in \llbracket C \rrbracket(\varepsilon)$, we have $\Gamma[x] \ne u$
\end{corollary}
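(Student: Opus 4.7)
The plan is to observe that this corollary is essentially the contrapositive of the second clause of the preceding theorem, so no machinery beyond that theorem is needed. Specifically, the theorem says that if there exists $\Gamma \in \llbracket C \rrbracket(\varepsilon)$ with $\Gamma[x] = u \ne \bot$, then $\varphi \land c_x \land (v_x = u)$ is satisfiable. Contraposing this implication gives: if $\varphi \land c_x \land (v_x = u)$ is unsatisfiable, then there is no $\Gamma \in \llbracket C \rrbracket(\varepsilon)$ with $\Gamma[x] = u$ (where the constraint $u \ne \bot$ is implicit, see below). That is exactly the statement of the corollary.

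Concretely, I would structure it as a two-line proof: suppose for contradiction that some $\Gamma \in \llbracket C \rrbracket(\varepsilon)$ satisfies $\Gamma[x] = u$; then the hypothesis of the second clause of the theorem applies, producing a model of $\varphi \land c_x \land (v_x = u)$, contradicting the assumed unsatisfiability. The only subtlety — and the closest thing to an obstacle — is that $u$ must be a value in the underlying SMT sort of $x$ for the expression $v_x = u$ to be well-formed. This rules out $u = \bot$ syntactically, so the corollary is implicitly about non-$\bot$ values and the contraposition goes through with no additional case analysis. If one wanted to cover $u = \bot$ as well, one would invoke the first clause of the theorem instead (with $\lnot c_x$ in place of $c_x \land v_x = u$), but this is not required by the statement.
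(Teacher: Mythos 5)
Your proposal is correct and matches the paper's own proof, which states in full that the corollary ``is just the contrapositive'' of the second clause of the preceding theorem. Your added remark about $u \ne \bot$ being implicit in the well-formedness of $v_x = u$ is a reasonable clarification but does not change the argument.
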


\begin{proof} This is just the contrapposite. \end{proof}

Thus in practice this translation allows to prove that a variable can
never be equal to some value. In particular in the case of boolean
values, it can be used to prove that some condition can never be
false, i.e. is always true; this allows to prove assertions about the
program.

Note that the translation is linear in the size of the term, notably
because we create new SMT variable for every LAF variable. Because the
translation is linear, we can see the translation to LAF term +
conversion to SMT formula as another implementation of the efficient
weakest precondition technique of Leino\cite{leino2005efficient}.

\subsection{Translation to Horn}
\label{sec:org386170d}

The translation is relatively similar, except that we make use of Horn
clauses to handle the recursion in the \(\mu\) term. The only real issue
is that in LAF term, the body of the \(\mu\) can use variables defined
outside of the body; this is called environment capture in functional
language. This is fixed by explicitly passing the contents of these
captured variables in the translation of the body.

\section{Translation from {\textsc{While}}}
\label{sec:orgb4a3cf1}
\label{sec:translation-from-while}

\newcommand{\transvar}[1]{\langle\!| #1 |\!\rangle}
\newcommand{\transexpr}[1]{(\!| #1 |\!)}
\newcommand{\transstmt}[1]{\{\!| #1 |\!\}}

The translation is relatively standard. The memory state corresponding
to each statement is represented by the tuple of the values of each
variable; \(\transvar{\cdot}\) is the constant mapping from variables to
indices. We have used "\([M \textrm{\ except\ } i \mapsto x]\)" as a
syntactic sugar for "\(\langle
M[0],M[1],\ldots,M[i-1],x,M[i+1],\ldots,M[n]\rangle\)". The translation
is quite naive; in particular the operations on tuples get/set, and
\emph{nondet} of tuples, should be simplified.

\begin{figure}[htbp]
\begin{align*}
\transexpr{\texttt{var}}(C,M) =
 &\ \langle C[\textrm{let\ } x' = M[\transvar{var}]], x' \rangle \\
 \transexpr{\textrm{op}_n(e_1,\ldots,e_n)}(C,M) =  
 &\ \textrm{let\ } \langle C_1,x_1 \rangle = \transexpr{e_1}(C,M) \textrm{\ in} \\
 &\ \textrm{let\ } \langle C_2,x_2 \rangle = \transexpr{e_2}(C_1,M_1) \textrm{\ in} \\[-5pt]
 &\ \vdots \\
 &\ \textrm{let\ } \langle C_n,x_n \rangle = \transexpr{e_n}(C_{n-1},M_{n-1}) \textrm{\ in} \\
 &\ \langle C[\textrm{let\ } x' = op_n(x_1,\ldots,x_n)], x' \rangle \\
 \transstmt{{\tt var} {\tt :=\ } e}(C,M) = 
 &\ \textrm{let\ } \langle C',x \rangle = \transexpr{e}(C,M) \textrm{\ in} \\
 &\ \langle C', [M \textrm{\ except\ } \transvar{\tt var} \mapsto x] \rangle \\
 \transstmt{s_1;s_2}(C,M) = 
 &\ \textrm{let\ } \langle C_1,M_1 \rangle = \transstmt{s_1}(C,M) \textrm{\ in} \\
 &\ \transstmt{s_2}(C_1,M_1) \\
 \transstmt{\textrm{\tt if\ } e \textrm{\tt \ then\ } s_{\textrm{then}} \textrm{\tt\ else\ } s_{\textrm{else}}}(C,M) = 
 &\ \textrm{let\ } \langle C_1,x \rangle = \transexpr{e}(C,M) \textrm{\ in} \\
 &\ \textrm{let\ } C_2 = C_1[\textrm{let\ } nx = \lnot x] \textrm{\ in} \\
 &\ \textrm{let\ } \langle C_3,M_{\textrm{then}} \rangle = \transstmt{s_{\textrm{then}}}(C_2,M) \textrm{\ in} \\
 &\ \textrm{let\ } \langle C_4,M_{\textrm{else}} \rangle = \transstmt{s_{\textrm{else}}}(C_3,M) \textrm{\ in} \\
 &\ \textrm{let\ } C_5 = C_4[\textrm{let\ } M_{\textrm{then}}' = \textrm{assume}(x,M_{\textrm{then}})] \textrm{\ in} \\
 &\ \textrm{let\ } C_6 = C_5[\textrm{let\ } M_{\textrm{else}}' = \textrm{assume}(nx,M_{\textrm{else}})] \textrm{\ in} \\
 &\ \textrm{let\ } C_7 = C_6[\textrm{let\ } M'= \textrm{nondet}(M_{\textrm{then}}',M_{\textrm{else}}')] \textrm{\ in} \\
 &\ \langle C_7, M' \rangle  \\
 \transstmt{\textrm{\tt while\ } e \textrm{\tt\ do\ } s \textrm{\tt\ done}}(C,M) = 
 &\ \textrm{let\ } {\bf M} = \textrm{fresh()} \textrm{\ in} \\
 &\ \textrm{let\ } \langle C_1,x \rangle = \transexpr{e}(C,{\bf M}) \textrm{\ in} \\
 &\ \textrm{let\ } C_2 = C_1[\textrm{let\ }M_2 = \textrm{assume}(x,{\bf M})] \textrm{\ in} \\
 &\ \textrm{let\ } \langle C_3,M_3 \rangle = \transstmt{s}(C_2,M_2) \textrm{\ in} \\
 &\ \textrm{let\ } C_4 = C[\textrm{let\ } M' = (\mu {\bf M}. C_3[M_3])(M) \textrm{\ in} \\
 &\ \textrm{let\ } \langle C_5, x' \rangle = \transexpr{e}(C_4,M') \textrm{\ in} \\
 &\ \textrm{let\ } C_6 = C_5[\textrm{let\ } nx' = \lnot x'] \textrm{\ in} \\
 &\ \textrm{let\ } C_7 = C_6[\textrm{let\ } M'' = \textrm{assume}(nx',M')] \textrm{\ in} \\
 &\ \langle C_7,M'' \rangle
\end{align*}
\caption{Translation from the \textsc{While} language to LAF terms}
\end{figure}

\begin{figure}[htbp]

\hspace{0.1\textwidth}\begin{minipage}{0.3\textwidth}

\begin{verbatim}
x := 0;
y := 0;
while(x < n) {
   x := x + 1;
   y := y + 1;
}
assert(x == y);
\end{verbatim}

\end{minipage}%
\begin{minipage}{0.6\textwidth}

{\footnotesize \begin{flalign*}
& \textcolor{white}{1}\hspace{0mm}   \textrm{let\ } x_0 \triangleq \textrm{unknown}_\mathbb{Z}() \textrm{\ in}\\[-3pt]
& \textcolor{white}{1}\hspace{0mm}   \textrm{let\ } y_0 \triangleq \textrm{unknown}_\mathbb{Z}() \textrm{\ in}\\[-3pt]
& \textcolor{white}{1}\hspace{0mm}   \textrm{let\ } n \triangleq \textrm{unknown}_\mathbb{Z}() \textrm{\ in}\\[-3pt]
& \textcolor{white}{1}\hspace{0mm}   \textrm{let\ } M_0 \triangleq \langle x_0, y_0, n_0 \rangle \textrm{\ in}\\[-3pt]
& \textcolor{white}{1}\hspace{0mm}   \textrm{let\ } M_1 \triangleq \langle 0, y_0, n_0 \rangle \textrm{\ in}\\[-3pt]
& \textcolor{white}{1}\hspace{0mm}   \textrm{let\ } M_2 \triangleq \langle 0, 0, n_0 \rangle \textrm{\ in}\\[-3pt]
& \textcolor{white}{1}\hspace{0mm}   \textrm{let\ } \langle x_3, y_3 \rangle = (\mu{\bf M}. \\[-3pt]
& \textcolor{white}{1}\hspace{0mm}   \textrm{let\ } x_1 \triangleq M[0] \textrm{\ in}\\[-3pt]
& \textcolor{white}{1}\hspace{8mm}   \textrm{let\ } c_1 \triangleq x < n \textrm{\ in} \\[-3pt]
& \textcolor{white}{1}\hspace{8mm}   \textrm{let\ } M_3 \triangleq \textrm{assume}(c,M) \textrm{\ in}\\[-3pt]
& \textcolor{white}{1}\hspace{8mm}   \textrm{let\ } x_1 \triangleq M_3[0] \textrm{\ in}\\[-3pt]
& \textcolor{white}{1}\hspace{8mm}   \textrm{let\ } y_1 \triangleq M_3[1] \textrm{\ in}\\[-3pt]
& \textcolor{white}{1}\hspace{8mm}   \textrm{let\ } n_1 \triangleq M_3[2] \textrm{\ in}\\[-3pt]
& \textcolor{white}{1}\hspace{8mm}   \textrm{let\ } x_2 \triangleq x_1 + 1 \textrm{\ in}\\[-3pt]
& \textcolor{white}{1}\hspace{8mm}   \textrm{let\ } y_2 \triangleq y_1 + 1 \textrm{\ in}\\[-3pt]
& \textcolor{white}{1}\hspace{8mm}   \textrm{let\ } M_4 \triangleq \langle x_2, y_2,n_1 \rangle \textrm{\ in}\\[-3pt]
& \textcolor{white}{1}\hspace{8mm}   M_4)(M_2) \textrm{\ in}\\[-3pt]
& \textcolor{white}{1}\hspace{0mm}   \textrm{let\ } x_3 \triangleq M_4[0] \textrm{\ in}\\[-3pt]
& \textcolor{white}{1}\hspace{0mm}   \textrm{let\ } c_2 \triangleq x_3 < n \textrm{\ in} \\[-3pt]
& \textcolor{white}{1}\hspace{0mm}   \textrm{let\ } nc \triangleq \lnot c_2\textrm{\ in} \\[-3pt]
& \textcolor{white}{1}\hspace{0mm}   \textrm{let\ } M_5 \triangleq \textrm{assume}(nc,M_4) \textrm{\ in}\\[-3pt]
& \textcolor{white}{1}\hspace{0mm}   \textrm{let\ } x_6 \triangleq M_5[0] \textrm{\ in}\\[-3pt]
& \textcolor{white}{1}\hspace{0mm}   \textrm{let\ } y_6 \triangleq M_5[1] \textrm{\ in}\\[-3pt]
& \textcolor{white}{1}\hspace{0mm}   \textrm{let\ } c_3 \triangleq x_6 = y_6 \textrm{\ in} \\[-3pt]
& \textcolor{white}{1}\hspace{0mm}   c
\end{flalign*}}

\end{minipage}

\vspace{-3mm}
\caption{Full (naive) translation of the assertion in the program on the left. A better translation would 
remove useless terms such as $M_1$ and $x_0$, and would realize that $n$ is not modified in the loop (and thus does not need to be in the loop tuple).}
\label{fig:trans-exemple}
\end{figure}
\end{document}